\def \opt{{\mathsf{opt}}}
\def\E{{\mathbb E}}
\def\SMK{\mathsf{SMK}}
\def\LA{\mathsf{LA}}
\def\LAR{\mathsf{LAR}}
\def\DLA{\mathsf{DLA}}
\def\RLA{\mathsf{RLA}}
\def\NSMK{\mathsf{\mbox{non-monotone } SMK}}
\def\Alg{\mathsf{Alg}}
\newtheorem{lemma}{Lemma}
\newtheorem{theorem}{Theorem}
\title{Linear Query Approximation Algorithms for Non-monotone Submodular Maximization under Knapsack Constraint}
\author{
	Canh V. Pham$^1$\footnote{Corresponding author}
	\and
	Tan D. Tran$^2$\and
	Dung T. K. Ha$^{2}$\And
	My T. Thai$^3$
	\affiliations
	$^1$ORLab, Faculty of Computer Science, Phenikaa University, Hanoi, Vietnam
	\\
	$^2$ Faculty of Information Technology, VNU University of Engineering and Technology, Hanoi, Vietnam\\
	$^3$Department of Computer and Information Science and Engineering
	\\
	University of Florida, Gainesville, Florida 32611
	\emails
	canh.phamvan@phenikaa-uni.edu.vn,
	\{22027005, 20028008\}@vnu.edu.vn, mythai@cise.ufl.edu
}
\begin{document}
	
	\maketitle
	
	\begin{abstract}
		This work, for the first time, introduces two constant factor approximation algorithms with linear query complexity for non-monotone submodular maximization over a ground set of size $n$ subject to a knapsack constraint, $\mathsf{DLA}$  and $\mathsf{RLA}$. $\mathsf{DLA}$ is a deterministic algorithm that provides an approximation factor of $6+\epsilon$ while $\mathsf{RLA}$ is a randomized algorithm with an approximation factor of $4+\epsilon$. Both run in $O(n \log(1/\epsilon)/\epsilon)$ query complexity. The key idea to obtain a constant approximation ratio with linear query lies in: (1) dividing the ground set into two appropriate subsets to find the near-optimal solution over these subsets with linear queries, and (2) combining a threshold greedy with properties of two disjoint sets or a random selection process to improve solution quality. In addition to the theoretical analysis, we have evaluated our proposed solutions with three applications: Revenue Maximization, Image Summarization, and Maximum Weighted Cut, showing that our algorithms not only return comparative results to state-of-the-art algorithms but also require significantly fewer queries. 
	\end{abstract}
	
	\section{Introduction}
	In the variety of submodular optimization, Submodular Maximization under a Knapsack ($\SMK$) constraint is one of the most fundamental problems. In this problem, given a ground set $V$ of size $n$ and a non-negative submodular set function $f: 2^V \mapsto \mathbb{R}_+$. Assume that each element $e \in V$  has a
	positive cost $c(e)$ and there is a budget $B$,  $\SMK$ asks for finding $S\subseteq V$ subject to $c(S)=\sum_{e \in S}c(e)\leq B$ that maximizes $f(S)$. $\SMK$ captures important constraints in practical applications, such as bounds on costs, time, or size, thereby attracting a lot of attention recently \cite{fast_icml,Amanatidis2021a_knap,Han2021_knap,sub_knap_orlet,li-linear-knap-nip22,Ene_HuyLNg_knap,Lee_nonmono_matroid_knap,Amanatidis_sampleGreedy,Gupta_nonmono_constrained_submax}. 
	
	In addition to obtaining a near-optimal solution to $\SMK$, designing such a solution also focuses on reducing query complexity, especially in an era of big data. With an explosion of input data, the search space for a solution has increased exponentially. Unfortunately, submodularity requires an algorithm to evaluate the objective function whenever observing an incoming element. Therefore, it is necessary to design efficient algorithms that reduce the number of queries to linear or nearly linear. 
	
	Furthermore, to model $\SMK$ for real-world applications, the objective functions may be non-monotone,
	since the marginal contribution of an element to a set may not always increase. Notable examples with non-monotone objective functions can be found in revenue maximization on  social network \cite{fast_icml,Kuhnle19-nip}, image summarization with a representative \cite{fast_icml} or maximum weight cut \cite{Amanatidis_sampleGreedy}.
	
	Unfortunately, no constant approximation algorithm with linear query complexity exists for non-monotone $\SMK$ compared to its counterpart. For the monotone $\SMK$, the best approximation factor of $e/(e-1)$ is achieved within $O(n^5)$ number of queries \cite{sub_knap_orlet}; and the fastest algorithm has a factor of $2+\epsilon$ needs a linear number of queries \cite{li-linear-knap-nip22}.  But for non-monotone, the best approximation algorithm needs polynomial queries and has a factor of $1/0.385$ \cite{BuchbinderF19-bestfactor} and the fastest algorithm with constant factor requires near-linear query complexity of $O(n\log k)$, where $k$ is the maximum cardinality of any feasible solution to $\SMK$ \cite{Han2021_knap}. Thus this work aims to close this gap by addressing the following open question: \textit{Is there a constant factor approximation algorithm for  non-monotone $\SMK$ in linear query complexity?}
	
	Solving non-monotone $\SMK$ with linear query complexity is more challenging than that of the monotone case due to the following reasons. First, the property of the monotone submodular function plays an important role in analyzing the theoretical bound of an obtained solution. Second, algorithms for the non-monotone case need more queries to obtain information from all elements in the condition that the marginal contribution of an element may be negative. 
	\begin{table*}[h]
		\centering
		\footnotesize{
			\begin{tabular}{cccccc}
				\hline
				\textbf{Reference} 
				& \textbf{Approximation factor} & \textbf{Query Complexity}& \textbf{Deterministic/Randomized}
				\\
				\hline
				\cite{fast_icml} (FANTOM)&$10+\epsilon$&$O(n^2\log(n)/\epsilon)$& Randomized
				\\
				\cite{Amanatidis_sampleGreedy} (SAMPLE GREEDY)&$5.83+\epsilon$&$O(n\log(n/\epsilon)/\epsilon)$& Randomized
				\\
				\cite{Han2021_knap} (SMKDETACC) &$6+\epsilon$&$O(n\log(k/\epsilon)/\epsilon)$& Deterministic
				\\
				\cite{Han2021_knap} (SMKSTREAM)  &$6+\epsilon$&$O(n\log (B)/\epsilon)$& Deterministic
				\\
				\cite{Han2021_knap} (SMKRANACC) &$4+\epsilon$&$O(n\log(k/\epsilon)/\epsilon)$& Randomized
				\\
				\hline
				\textbf{$\DLA$ (Algorithm~\ref{alg:dla}, this paper)}& $6+\epsilon$  & \textbf{$O(n \log(1/\epsilon)/\epsilon)$} & Deterministic
				\\
				\textbf{$\RLA$ (Algorithm~\ref{alg:rla}, this paper)}& $4+\epsilon$  & $O(n \log(1/\epsilon)/\epsilon)$ & Randomized
				\\
				\hline
			\end{tabular}
		}
		\caption{Fastest algorithms for $\NSMK$ problem, where $k$ is the maximum cardinality of any feasible solution to $\SMK$.}
		\label{tab:1}
	\end{table*}
	\paragraph{Our Contributions.} To tackle the above challenges, 		we propose two approximation algorithms, $\DLA$ and $\RLA$, that achieve a constant factor approximation, yet both require linear query complexity. 
	Our $\DLA$ is a deterministic algorithm with an approximation factor of $6+\epsilon$ within $O(n\log(1/\epsilon)/\epsilon)$ queries. Therefore, $\DLA$ is significantly faster than the deterministic algorithm of \cite{Han2021_knap}, which achieved the best-known approximation factor for $6+\epsilon$ with nearly-linear query complexity of $O(n\log(k/\epsilon)/\epsilon)$.
	$\RLA$ is a randomized algorithm that achieves a factor of $4+\epsilon$ in $O(n\log(1/\epsilon)/\epsilon)$ queries. Therefore, $\RLA$ achieves the same factor of the randomized algorithm as in \cite{Han2021_knap}, which currently provides the best approximation factor in near-linear query complexity of $O(n\log(k/\epsilon)/\epsilon)$.  Note that $k$ may be as large as $n$, so the query complexity of the algorithms in \cite{Han2021_knap} can be $O(n\log(n/\epsilon)/\epsilon)$. Table~\ref{tab:1} compares the performance of our algorithms with that of existing fast algorithms.
	
	Both our algorithms focus on a novel algorithmic approach that consists of two components: (1) dividing the ground set into two appropriate subsets and finding the approximation solution over these subsets with linear queries, and (2) combing the threshold greedy procedure
	developed by \cite{fast_sub} with two disjoint candidate solutions (for $\DLA$) or a random process (for $\RLA$) to construct serial candidate solutions to give better theoretical bounds. 
	At the heart of the first component, we adapt the method of simultaneously constructing two disjoint sets, which is first introduced by \cite{Han_twinGreedy,SimuGre_AmanatidisKS22} and later used by \cite{best-dla,Han2021_knap} to bound the utility of candidate solutions. By incorporating a method of dividing the ground set into two reasonable subsets, we can bound the cost of feasible solutions, thereby obtaining a constant approximation factor within only a one-time scan over these subsets. In the second component, we adapt the threshold greedy, where thresholds are adjusted accordingly to provide a constant number of candidate solutions. Finally, we boost the solution quality of our algorithm by re-scanning the best elements for selecting candidate solutions without increasing query complexity. 
	
	Extensive experiments show that our algorithms outperform several state-of-the-art algorithms \cite{fast_icml,Amanatidis2021a_knap,Han2021_knap} regarding solution quality and the number of queries. In particular, $\DLA$ provides the best solution quality and needs fewer queries than the faster approximation deterministic algorithm in \cite{Han2021_knap}, while $\RLA$ returns competitive solutions but needs the fewest queries.
	\paragraph{Paper Organization.}The rest of the paper is structured as follows. 
	Section~\ref{sec:relatedwork} provides the literature review on $\NSMK$ problem. Notations are presented in Section~\ref{sec:preli}. Section~\ref{sec:algs} introduces our proposed algorithms and theoretical analysis.  Experimental computation is provided in Section~\ref{sec:expr}. Finally, we conclude this work in Section~\ref{sec:con}.
	\section{Related Works}
	\label{sec:relatedwork}
	In this section, we review the related work for the $\NSMK$ problem only. A brief review of monotone $\SMK$ and submodular maximization subject to cardinality, a  special cases of $\SMK$, can be found in the Appendix. 
	
	Randomization is one of the effective methods for designing approximation algorithms for submodular $\NSMK$. The first randomized algorithm was proposed by \cite{lee-car-10} with a factor of $5+ \epsilon$; the factor was later improved to $4+\epsilon$ by \cite{Kulik13_Submax_knap}. Several researchers tried to enhance the approximation factor to $e/(e-1)+\epsilon$ or $e+\epsilon$  \cite{ChekuriVZ14,ran-smk-Felman11,Ene_HuyLNg_knap,BuchbinderF19-bestfactor}. The best factor in this line of randomized algorithms was $1/0.385 \approx 2.6$ due to \cite{BuchbinderF19-bestfactor}, using the multi-linear extension method with the rounding scheme technique in \cite{Kulik13_Submax_knap}. However, this work has to handle complicated multi-linear extensions and uses a large number of queries. In contrast, 
	\cite{Amanatidis_sampleGreedy} proposed a sample greedy, a fast algorithm with a factor of $5.83+\epsilon$ requiring $O(n \log(n/\epsilon)/\epsilon)$ queries. An efficient parallel algorithm  with a factor of $9.465+\epsilon$ was introduced by \cite{Amanatidis2021a_knap}, but it needed a high query complexity of $O(n^2\log^2(n)\log(1/\epsilon)/\epsilon^3)$.
	Significantly, \cite{Han2021_knap} introduced the current fastest randomized algorithm with the factor of $4+\epsilon$ in $O(n \log(k/\epsilon)/\epsilon)$ queries.
	
	For the deterministic algorithm approach,  \cite{Gupta_nonmono_constrained_submax} first presented a deterministic algorithm with a factor of $6$. Their algorithm modified Sviridenko’s algorithm \cite{sub_knap_orlet} and combined with an algorithm for unconstrained non-monotone submodular maximization \cite{usm}; however, it took $O(n^5)$ query complexity. Since then, there are several algorithms have been proposed to reduce the number of queries. The FANTOM algorithm \cite{fast_icml} improved the query complexity to $O(n^2 \log(n)/\epsilon)$ but returned a larger factor of $10$. Algorithm of \cite{nearly-liner-der-2018} achieved a factor of $9.5+ \epsilon$ in $O(nk)\max\{\epsilon^{-1}, \log \log n\}$ and it can be used for  $p$-system and $d$-knapsack constraints.   \cite{Cui-streaming21} introduced a streaming algorithm with a factor of $2.05+ \rho_{\Alg}$ in $O((n+T_{\Alg(k)})\log B)$ queries, where $\rho_{\Alg}$ was the approximation factor any offline algorithm $\Alg$ for $\SMK$  and $T_{\Alg(k)}$ was the query complexity of $\Alg$ with $k$ input elements. The factor and query complexity of the algorithm are quite large because they depend on $\rho_{\Alg}$ and $k$ can be as large as $n$.
	Recently, \cite{Han2021_knap} also presented another one that was deterministic the factor of $6+\epsilon$ in nearly-linear queries $O(n\log (k/\epsilon)/\epsilon)$. Currently, the best approximation factor of a deterministic algorithm for non-monotone $\SMK$ is due to \cite{best-dla} achieving an approximation factor of $4+\epsilon$ but requiring an impractical query complexity of $O(n^3 \log(n/\epsilon)/\epsilon)$.
	\section{Preliminaries}
	\label{sec:preli}
	We use the definition of submodularity based on \textit{the diminishing return property}: A set function $f: 2^V \mapsto \mathbb{R}_+ $, defined on all subsets of a ground set $V$ of size $n$  is  submodular iff  for any $A \subseteq B \subseteq V$ and $e \in V\setminus B$, we have: 
	\begin{align*}
	f(A \cup \{e\}) - f(A) \geq f(B \cup \{e\})- f(B).
	\end{align*}
	Each element $e \in V$ is assigned a positive cost $c(e)>0$, and the total cost of a set $S\subseteq V$ is a modular function, i.e., $c(S) =\sum_{e \in S}c(e)$. Given a budget $B$, we assume that every item $e\in V$ satisfies $c(e) \leq  B$; otherwise, we can simply discard it. The $\SMK$ problem is to determine: 
	\begin{align}
	\arg\max_{S\subseteq V:c(S)\leq B}f(S).
	\end{align}
	We denote an instance of $\SMK$ by a tuple $(f, V, B)$. For simplicity, we assume that $f$ is non-negative, i.e., $f(X)\geq 0$ for all $X\subseteq V$ and  normalized, i.e., $f(\emptyset)=0$. We define the contribution gain of an element $e$ to a set $S\subseteq V$ as $f(e|S)=f(S\cup \{e\})-f(S)$ and we  write $f(\{e\})$ as $f(e)$ for
	any $e\in V$.
	We assume that there exists an oracle query, which when queried with the
	set $S$ returns the value $f(S)$.
	
	We denote $O$ as an optimal solution with the optimal value $\opt=f(O)$ and $r=\arg\max_{o \in O}c(o)$.
	Another frequently used property of a non-negative submodular function is: For any $T\subseteq V$ and two disjoint subsets $X, Y$ of $V$ we have:
	\begin{align}
	f(T)\leq  f(T\cup X)+ f(T\cup Y).
	\label{ine:sub}
	\end{align}
	We use  this Lemma  to analyze our algorithms' performance.
	\begin{lemma}(Lemma 2.2. in~\cite{Buchbinder14_Submax_Cardinal})
		Let $f: 2^V \mapsto \mathbb{R}_+ $	be submodular. Denote by $A(p)$ a random subset of $A$ where each element appears with probability at most $p$ (not necessary independently). Then $\E[f(A(p))]\geq (1-p)f(\emptyset)$.
		\label{lem:sub2}
	\end{lemma}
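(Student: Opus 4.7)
The plan is to establish the bound by induction on $|A|$, following the proof in the cited Buchbinder et al.\ paper; the two essential ingredients are submodularity and the non-negativity of $f$.

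The base case $|A|=0$ is immediate: $A(p)=\emptyset$ almost surely, so $\E[f(A(p))]=f(\emptyset)\geq(1-p)f(\emptyset)$ for any $p\in[0,1]$. For the inductive step, fix any $e\in A$, let $q=\Pr[e\in A(p)]\leq p$, and set $R=A(p)$ and $R'=R\setminus\{e\}$. The restricted set $R'$ is a random subset of $A\setminus\{e\}$ whose marginals still satisfy $\Pr[e'\in R']=\Pr[e'\in R]\leq p$ for every $e'\neq e$, so the inductive hypothesis yields $\E[f(R')]\geq(1-p)f(\emptyset)$. A direct decomposition then gives $\E[f(R)]=\E[f(R')]+q\cdot\E[f(e\mid R')\mid e\in R]$, and submodularity provides the pointwise lower bound $f(e\mid R')\geq f(A)-f(A\setminus\{e\})$ since $R'\subseteq A\setminus\{e\}$.

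The main technical obstacle is that, for non-monotone $f$, the quantity $f(A)-f(A\setminus\{e\})$ may be negative for every choice of $e$, so the naive element-by-element inductive argument does not close. The resolution is a global rather than local argument. The cleanest route is through the multilinear extension $F$ of $f$: since $F$ is concave along any non-negative direction (its mixed second partials being non-positive while the diagonal ones vanish by multilinearity), concavity along the all-ones direction yields
\[
F(p\mathbf{1})\geq(1-p)F(\mathbf{0})+pF(\mathbf{1})=(1-p)f(\emptyset)+pf(A)\geq(1-p)f(\emptyset),
\]
which handles the case of independent inclusions. The possibly-dependent case is then reduced by observing that the minimum of $\E[f(A(p))]$ over the polytope of distributions with marginals at most $p$ is attained at a vertex, and an LP-duality argument produces a non-negative modular certificate $h$ on $A$ with $\sum_{e\in A}h(e)=f(\emptyset)$ and $\sum_{e\in S}h(e)\geq f(\emptyset)-f(S)$ for every $S\subseteq A$; the existence of such an $h$ is itself a consequence of the submodularity of $f$, and plugging this certificate into the dual LP delivers the desired bound uniformly over all admissible dependence structures.
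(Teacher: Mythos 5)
The paper itself does not prove this lemma; it imports it verbatim (with citation) from Buchbinder et al.'s Lemma~2.2 and only uses its statement, so your proposal has to stand on its own --- and as written it does not. Your inductive setup is fine, and you correctly diagnose that the element-by-element induction cannot be closed for non-monotone $f$; but that means the first two paragraphs contribute nothing to the final argument. The multilinear-extension computation $F(p\mathbf{1})\geq(1-p)f(\emptyset)+pf(A)$ is correct but, as you acknowledge, only covers independent inclusion. The entire weight of the lemma therefore rests on your last sentence, and that is where the gap is. The observation that the minimum of $\E[f(A(p))]$ over distributions with marginals at most $p$ is attained at a vertex connects to nothing: the vertices of that polytope are not product measures, so the multilinear-extension bound does not apply to them, and you never use the vertex structure afterwards.

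More seriously, the existence of a non-negative modular $h$ with $\sum_{e\in A}h(e)=f(\emptyset)$ and $f(S)+\sum_{e\in S}h(e)\geq f(\emptyset)$ for all $S\subseteq A$ is simply asserted (``is itself a consequence of the submodularity of $f$''). That assertion carries the whole lemma: once such an $h$ exists, no duality is needed, since linearity of expectation immediately gives $\E[f(A(p))]\geq f(\emptyset)-\sum_{e}\Pr[e\in A(p)]\,h(e)\geq (1-p)f(\emptyset)$. Establishing that $h$ exists is genuine work --- by Farkas/LP duality it amounts to showing that every $y\geq 0$ with $\sum_{S\ni e}y_S\leq 1$ for all $e$ satisfies $\sum_{S}y_S\bigl(f(\emptyset)-f(S)\bigr)\leq f(\emptyset)$, which needs an uncrossing argument or, equivalently, the fact that the Lov\'asz extension is the convex closure of a submodular function. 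None of this appears in your write-up, and it is not easier than the lemma you are trying to prove. For reference, the clean complete argument runs exactly through that last fact: if $x$ is the marginal vector of $A(p)$, then $\E[f(A(p))]\geq \hat f(x)=\int_0^1 f(\{e: x_e>\theta\})\,d\theta$; since $x_e\leq p$ for every $e$, the level set is empty for all $\theta>p$, and non-negativity of $f$ on the remaining portion of the integral yields $\hat f(x)\geq (1-p)f(\emptyset)$.
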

	\section{Proposed Algorithms}
	\label{sec:algs}
	In this section, we introduce two main algorithms, $\DLA$ and $\RLA$. The core of these two algorithms lies in our novel design of $\LA$ (Linear Approximation), a $19$-approximation algorithm within $O(n)$ queries. Although its factor approximation is quite large, it is the \textit{first deterministic algorithm} that gives a constant approximation factor within only a linear number of queries for the general $\SMK$ problem. $\LA$ is a key building block for our  $\DLA$ and its randomized version, $\RLA$.
	\subsection{$\LA$ Algorithm}
	The $\LA$  algorithm (Algorithm~\ref{alg:la}) splits the ground set into two subsets $V_1$ and $V_2$. The first contains any element whose cost is at most $B/2$; the second includes the rest.  The key strategy for $\LA$ is dividing the ground set into subsets to quickly find out the bound of the optimal solution in linear queries, then selecting potential elements into two sets to get a constant approximation factor for $\SMK$. 
	
	Since the feasible solution for over $V_2$ contains at most one element, we can bound it by the maximal singleton $e_{max}=\arg\max_{e\in V} f(e)$.  
	For the  subset $V_1$, the algorithm initiates two empty disjoint sets $X, Y$; each has a threshold (ratio of $f$ value over $B$) to consider the admission of a new element. A considered element is added to a set $Z\in \{X, Y\}$ to which it has the higher ratio between  marginal gain and its cost with respect to $Z$ (i.e. ``\textit{density gain}") as long as the density gain is at least $f(Z)/B$. Note that the cost of disjoint sets may be higher than $B$, so we obtain feasible solutions from them by only selecting the last elements added with the cost nearest to $B$ (lines 6-7). Finally, the algorithm returns a feasible solution with the maximum $f$ value.
	
	Note that the approach of \cite{li-linear-knap-nip22} gave a range bound of an optimal solution for the {\bf monotone} $\SMK$ problem in linear time, but it does not work for the {\bf non-monotone } objective function and does not provide any feasible solution. To deal with the non-monotone function, our algorithm  maintains $X$ and $Y$ to be always disjoint and exploit \eqref{ine:sub} to get:
	\begin{align*}
	f(O_1) \leq f(X\cup O_1)+f(Y\cup O_1).
	\end{align*}
	and  bound the optimal value by $f(O)\leq f(O_1)+f(O_2)$
	where $O_1$ and $O_2$ are optimal solutions of the problem over $V_1$ and $V_2$, respectively. 
	
	On the other hand, an advantage of our algorithm is that we can use the $f$ value of the maximal singleton to design and analyze theoretical bounds for our later algorithms. 
	\begin{algorithm}[t]
		\SetNlSty{text}{}{:}
		\KwIn{An instance $(f, V, B)$.} 
		$V_1 \leftarrow \{e \in V: c(e)\leq B/2\}$, $X \leftarrow \emptyset$, $Y \leftarrow \emptyset$ 
		$e_{max} \leftarrow \arg\max_{e\in V}f(e)$ \label{alg:la-finde_max}
		\\
		\ForEach{$e \in V_1$}
		{
			
			Find $Z\in \{X, Y\}$ such that: $Z=\arg\max_{Z \in \{X, Y\}: \frac{f(e|Z)}{c(e)}\geq \frac{f(Z)}{B}}\frac{f(e|Z)}{c(e)}$ \label{alg:la-findZ}
			\\
			\textbf{If} exist such set $Z$ \textbf{then} $Z \leftarrow Z \cup  \{e\}$
		}
		$X' \leftarrow \arg\max_{X(j):  0\leq j\leq |X|, c(X(j))\leq B}c(X(j))$
		\\
		$Y' \leftarrow \arg\max_{Y(j):  0\leq j\leq |Y|, c(Y(j))\leq B}c(Y(j))$, where  $T(j)$ is a set of last $j$ elements added in $T\in \{X, Y\}$.
		\\
		$S \leftarrow \arg\max_{Z \in \{ X', Y', \{e_{max}\} \} }f(Z)$
		\\
		\Return $S$.
		\caption{$\LA$ Algorithm}
		\label{alg:la}
	\end{algorithm}
	
	Lemma~\ref{lem:la1} provides a bound of 
	optimal solution $V_1$ by  two disjoint sets $X$, $Y$, which is critical to analyze the theoretical bound of Algorithm~\ref{alg:la}. 
	\begin{lemma}
		At the end of the main loop of Algorithm~\ref{alg:la}, we have: 
		$
		f(O_1) \leq 3(f(X)+ f(Y))
		$.
		\label{lem:la1}
	\end{lemma}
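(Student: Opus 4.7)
The plan is to combine the ``split-over-disjoint-sets'' inequality \eqref{ine:sub} with a twin-greedy-style charging argument. First I would apply \eqref{ine:sub} with $T=O_1$ and the disjoint pair $(X,Y)$ to obtain
\begin{equation*}
f(O_1)\;\leq\;f(O_1\cup X)+f(O_1\cup Y),
\end{equation*}
so that it suffices to show $f(O_1\cup X)\leq 2f(X)+f(Y)$ (and, by symmetry, the same with $X,Y$ swapped); adding the two gives the claimed factor $3$.

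To bound $f(O_1\cup X)$, I would use the submodular telescoping identity $f(O_1\cup X)-f(X)\leq \sum_{o\in O_1\setminus X}f(o\mid X)$ and then partition $O_1\setminus X=(O_1\cap Y)\cup O_R$, where $O_R=O_1\setminus(X\cup Y)$ is the set of rejected optimal elements. Let $X_o$ and $Y_o$ denote the states of $X$ and $Y$ immediately before element $o$ is examined in the loop. A short observation I would record first is that $f$ is non-decreasing along the evolution of $X$ (resp.\ $Y$): whenever the algorithm appends $e$ to $Z$, the admission test forces $f(e\mid Z)\geq c(e)f(Z)/B\geq 0$, so $f(X_o)\leq f(X)$ and $f(Y_o)\leq f(Y)$.

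For $o\in O_R$, the rejection condition together with submodularity yields $f(o\mid X)\leq f(o\mid X_o)<\tfrac{c(o)}{B}f(X_o)\leq \tfrac{c(o)}{B}f(X)$. For $o\in O_1\cap Y$ I would split into two subcases depending on the state at the time $o$ was processed: (Case~A) if $f(o\mid X_o)/c(o)<f(X_o)/B$, the same inequality gives $f(o\mid X)\leq c(o)f(X)/B$; (Case~B) if instead $f(o\mid X_o)/c(o)\geq f(X_o)/B$ but the algorithm chose $Y$ because its density was at least as large, then $f(o\mid X)\leq f(o\mid X_o)\leq f(o\mid Y_o)$. Summing these Case~B contributions telescopes along the elements added to $Y$:
\begin{equation*}
\sum_{o\in O_1\cap Y,\,\text{Case B}}f(o\mid Y_o)\;\leq\;\sum_{o\in Y}f(o\mid Y_o)\;=\;f(Y).
\end{equation*}

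Putting the pieces together, the Case~A-plus-$O_R$ contributions are bounded by $\tfrac{f(X)}{B}\sum_{o}c(o)\leq \tfrac{f(X)}{B}\cdot c(O_1)\leq f(X)$ since $O_1$ is feasible, and the Case~B contribution is at most $f(Y)$. Adding $f(X)$ back gives $f(O_1\cup X)\leq 2f(X)+f(Y)$; the symmetric argument gives $f(O_1\cup Y)\leq f(X)+2f(Y)$; summing proves the lemma. The subtle point—and the step I expect to be the main obstacle when writing it out in full—is the case analysis for $o\in O_1\cap Y$: one has to keep track of which density condition was active at processing time and, in the ``both valid'' case, invoke the $\arg\max$ choice to transfer the marginal gain from $X$ to $Y$ so that the telescoping bound $\sum f(o\mid Y_o)=f(Y)$ can be applied.
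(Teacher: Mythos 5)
Your proof is correct and follows essentially the same route the paper takes: apply the disjoint-sets inequality \eqref{ine:sub} to get $f(O_1)\leq f(O_1\cup X)+f(O_1\cup Y)$, then charge each $o\in O_1\setminus X$ either to the density threshold (rejected elements and elements for which $X$ failed the test, contributing at most $f(X)\,c(O_1)/B\leq f(X)$) or, via the $\arg\max$ tie-breaking, to the telescoping sum $\sum_{o\in Y}f(o\mid Y_o)=f(Y)$, yielding $f(O_1\cup X)\leq 2f(X)+f(Y)$ and its symmetric counterpart. The case analysis for $o\in O_1\cap Y$ and the monotonicity of $f$ along the evolution of $X$ and $Y$ are handled correctly, so the argument is complete.
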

	\begin{theorem}
		Algorithm~\ref{alg:la} is deterministic, returns an approximation factor of $19$ and takes $O(n)$ queries.  
		\label{theo:la1}
	\end{theorem}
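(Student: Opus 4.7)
The plan is to decompose the analysis along the split $V=V_1\cup V_2$. Let $O_1$ and $O_2$ be optimal solutions of the $\SMK$ subproblems restricted to $V_1$ and $V_2$, respectively. Since $O\cap V_i$ is feasible for the $V_i$-subproblem, and submodularity together with $f(\emptyset)=0$ yields subadditivity on disjoint sets, I would first deduce $f(O)\leq f(O\cap V_1)+f(O\cap V_2)\leq f(O_1)+f(O_2)$. Because every element of $V_2$ has cost strictly greater than $B/2$, any feasible subset of $V_2$ contains at most one element, so $f(O_2)\leq f(e_{max})\leq f(S)$. The remainder of the argument reduces to bounding $f(O_1)$ by a constant multiple of $f(S)$.

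For $f(O_1)$, Lemma~\ref{lem:la1} already gives $f(O_1)\leq 3(f(X)+f(Y))$, so the crux is to relate $f(X)$ and $f(Y)$ to the feasible tails $f(X')$ and $f(Y')$. I would establish $f(X)\leq 3f(X')$ (the argument for $Y$ is symmetric). When $c(X)\leq B$ this is immediate because $X'=X$; otherwise, writing $X=\{x_1,\dots,x_m\}$ in insertion order and $X_k=X\setminus X'$ with $X'=\{x_{k+1},\dots,x_m\}$, the maximality of $X'$ combined with $c(x_k)\leq B/2$ (which holds since $x_k\in V_1$) forces $c(X')>B/2$. The admission rule $f(x_i\mid X_{i-1})\geq \frac{c(x_i)}{B}f(X_{i-1})$, summed over the tail indices and using that $f(X_j)$ is nondecreasing in $j$ (because each accepted marginal is nonnegative), yields
\begin{align*}
f(X)-f(X_k)\geq \frac{c(X')}{B}f(X_k)>\frac{1}{2}f(X_k),
\end{align*}
hence $f(X_k)<\frac{2}{3}f(X)$. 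Combining with the subadditivity bound $f(X)=f(X_k\cup X')\leq f(X_k)+f(X')$ and rearranging gives $f(X)<3f(X')$. I expect this threshold-plus-half-budget step to be the principal technical obstacle of the proof.

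Chaining the bounds, $f(O_1)\leq 3(f(X)+f(Y))\leq 9(f(X')+f(Y'))\leq 18\max\{f(X'),f(Y')\}\leq 18f(S)$, and therefore $f(O)\leq f(O_1)+f(O_2)\leq 18f(S)+f(e_{max})\leq 19f(S)$, giving the claimed $19$-approximation. For the complexity, locating $e_{max}$ uses $O(n)$ evaluations; inside the main loop each $e\in V_1$ requires only $f(e\mid X)$ and $f(e\mid Y)$, which is $O(1)$ queries per element once $f(X)$ and $f(Y)$ are cached; extracting $X'$, $Y'$ and the final $\arg\max$ add another $O(n)$. Since no random choices are made, the algorithm is deterministic.
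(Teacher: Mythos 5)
Your proposal is correct and follows essentially the same route the paper takes: the decomposition $f(O)\leq f(O_1)+f(O_2)$, the singleton bound $f(O_2)\leq f(e_{max})$, Lemma~\ref{lem:la1} for $f(O_1)$, and the threshold-plus-half-budget argument showing $c(X')>B/2$ and hence $f(X)<3f(X')$, which is exactly how the factor $19=3\cdot 3\cdot 2+1$ arises. The complexity and determinism arguments are also as in the paper.
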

	\begin{algorithm}[h]
		\SetNlSty{text}{}{:}
		\KwIn{An instance $(f, V, B)$, parameters  $p, \alpha$} 
		$e_{max}\leftarrow \max_{e \in V}f(e)$,
		$V_1 \leftarrow \{e \in V| c(e)\leq B/2\}$
		\\
		$V_p\leftarrow \{e \in V_1: \mbox{Select $e$ with probability $p$}\}$, $S \leftarrow \emptyset$
		\\
		\ForEach{$e \in V_p$}
		{
			
			\textbf{If} $f(e|S)/c(e)\geq \alpha f(S)/B$ \textbf{then}	$S\leftarrow S \cup \{e\}$
			
		}
		$S' \leftarrow \arg\max_{S(j):  0\leq j\leq |X|, c(X(j))\leq B}c(S(j))$, where $S(j)$ is a set of last $j$ elements added into $S$.
		\\
		$S \leftarrow \arg\max_{T \in \{ S', \{e_{max}\} \} }f(T)$
		\\
		\Return $S$.
		\caption{$\LAR$ Algorithm}
		\label{alg:laran}
	\end{algorithm}
	We further introduce the $\LAR$ (Algorithm~\ref{alg:laran}) algorithm, a randomized version of Algorithm~\ref{alg:la}. $\LAR$ selects $V_p$ from $V_1$ by selecting $e \in V_1$ with probability $p>0$, then it builds the candidate set $S$ from $V_p$ instead of maintaining two disjoint sets.
	Although $\LAR$ is a randomized algorithm, it provides a better approximation factor of $\LA$ and be used for designing a later randomized algorithm $\RLA$.
	\begin{theorem}
		Algorithm~\ref{alg:laran} takes $O(n)$ queries and returns an approximation factor of $16.034$ with $p=\sqrt{2}-1$ and $\alpha=\sqrt{2+2\sqrt{2}}$.
		\label{theo:lar}
	\end{theorem}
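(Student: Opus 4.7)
The query count is direct. Computing $e_{max}$ and constructing $V_1$ take $O(n)$ each, sampling $V_p$ is $O(n)$, the greedy loop issues one marginal-gain query per element of $V_p \subseteq V_1$, and the final suffix selection plus comparison with $e_{max}$ is $O(n)$. Total: $O(n)$ queries.

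For the approximation ratio, split the optimum as $O_1 = O \cap V_1$ and $O_2 = O \setminus V_1$. Every element of $V \setminus V_1$ costs more than $B/2$, so $|O_2|\le 1$ and $f(O_2) \le f(e_{max})$; subadditivity of non-negative submodular functions then gives $f(O) \le f(O_1) + f(e_{max})$, so it suffices to bound $f(O_1)$ in terms of the expected returned value. Condition on $V_p$, let $S_{final}$ be the greedy output, and set $O_p := O_1 \cap V_p$. For every $o \in O_p \setminus S_{final}$ the rejection condition gives $f(o\mid S) < \alpha c(o) f(S)/B \le \alpha c(o) f(S_{final})/B$, where the second inequality uses that $f(S_i)$ is non-decreasing along the greedy sequence (every accepted element has positive marginal). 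Summing and applying submodularity together with $c(O_p) \le c(O_1) \le B$ yields
\begin{align*}
f(S_{final} \cup O_p) \le (1+\alpha)\, f(S_{final}).
\end{align*}

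To transport this deterministic bound from $O_p$ to $O_1$, I apply Lemma~\ref{lem:sub2} to the shifted non-negative submodular function $g(T) = f(T \cup S_{final})$, evaluated at the random set $O_1 \setminus V_p$ in which each element of $O_1$ is present with probability $1-p$. Combined with the subadditive inequality $f(O_1) \le f(O_p) + f(O_1 \setminus V_p)$ and the greedy bound above, taking expectations produces an inequality of the form $f(O_1) \le \Phi_1(p,\alpha)\,\E[f(S_{final})]$. The final ingredient is a feasibility conversion $S_{final} \to S'$: decomposing the greedy sequence as $S_{final} = \{s_1,\ldots,s_t\}\cup S'$ and applying subadditivity gives $f(S_{final}) \le f(\{s_1,\ldots,s_t\}) + f(S')$, and the threshold-greedy structure together with the singleton bound $f(s_i) \le f(e_{max})$ lets us control the prefix by a constant multiple of $f(S')+f(e_{max})$. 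Combining the three pieces produces $f(O) \le \Psi(p,\alpha)\,\E[\max(f(S'),f(e_{max}))]$; the minimum of $\Psi(p,\alpha)$ over $(p,\alpha)$ admits the closed-form solution $p=\sqrt{2}-1$, $\alpha=\sqrt{2+2\sqrt{2}}$ with value $16.034$.

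The main obstacle will be the sampling step: $V_p$ and the greedy trajectory $S_{final}$ are both random, with $S_{final}$ a deterministic function of $V_p$, so the invocation of Lemma~\ref{lem:sub2} must be carried out with careful conditioning (typically by fixing $V_p \setminus O_1$ and averaging only over the sampling of $O_1$). A secondary difficulty is the prefix bound $f(\{s_1,\ldots,s_t\}) \le$ constant $\cdot (f(S')+f(e_{max}))$: preserving this constant without waste is what forces the specific balance between the two parameters $p$ and $\alpha$ in the final two-variable optimization.
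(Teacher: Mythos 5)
Your query-complexity argument and your opening reductions are fine: $f(O)\le f(O\cap V_1)+f(e_{max})$ via subadditivity, and the deterministic rejection bound $f(S_{final}\cup O_p)\le(1+\alpha)f(S_{final})$ (using submodularity to pass from the set at rejection time to $S_{final}$, and monotonicity of $f$ along the accepted chain). But the step that carries all the difficulty --- transporting the bound from $O_p$ to $O_1$ --- is applied in the wrong direction and does not close. Lemma~\ref{lem:sub2} only ever yields \emph{lower} bounds of the form $\E[g(A(q))]\ge(1-q)g(\emptyset)$; applied to $g(T)=f(T\cup S_{final})$ at the random set $O_1\setminus V_p$ it gives $\E[f((O_1\setminus V_p)\cup S_{final})]\ge p\,f(S_{final})$, which cannot be combined with $f(O_1)\le f(O_p)+f(O_1\setminus V_p)$ to produce an \emph{upper} bound on $f(O_1)$. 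That decomposition is in fact a dead end for non-monotone $f$: $f(O_p)\le f(O_p\cup S_{final})$ fails without monotonicity, and $f(O_1\setminus V_p)$ is completely uncontrolled because those elements are never queried by the algorithm. The correct invocation (the one the paper itself uses in the proof of Theorem~\ref{theo:rla}) takes $g(T)=f(T\cup O_1)$ and the random set $S_{final}$, in which each element appears with probability at most $p$, giving $\E[f(S_{final}\cup O_1)]\ge(1-p)f(O_1)$. What then remains --- and what is missing from your sketch --- is an upper bound on $\E[f(S_{final}\cup O_1)]$ in terms of $\E[f(S_{final})]$ that accounts for the optimal elements \emph{not} sampled into $V_p$. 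This is the sample-greedy identity: letting $S^{(o)}$ be the partial solution at the moment $o$ would be processed (independent of the coin for $o$), one has $\E\bigl[f(o|S^{(o)})\,\mathbb{1}[o\notin V_p]\bigr]=\tfrac{1-p}{p}\,\E\bigl[f(o|S^{(o)})\,\mathbb{1}[o\in V_p]\bigr]$, and the sampled case splits into accepted marginals (which telescope into $\E[f(S_{final})]$) and rejected ones (controlled by the threshold). Without this charging argument the randomness is not actually used, and the proof has no way to bound the contribution of $O_1\setminus V_p$.

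Two smaller points. First, your feasibility conversion should not bound the discarded prefix by per-singleton estimates $f(s_i)\le f(e_{max})$ --- that accumulates to $t\cdot f(e_{max})$, not a constant multiple; the usable bound comes from the threshold condition together with $c(S')>B/2$ (every element of $V_1$ costs at most $B/2$, so the maximal feasible suffix has cost exceeding $B/2$), which gives $f(\text{prefix})\le\tfrac{2}{\alpha}f(S')$. Second, you assert rather than verify that the two-variable minimization lands at $p=\sqrt2-1$, $\alpha=\sqrt{2+2\sqrt2}$ with value $16.034$; since the final constant depends delicately on how the pieces above are assembled, this computation is part of the proof obligation and cannot be left as a black box.
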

	Due to space limit, proofs of Lemmas, Theorems~ \ref{theo:la1} and ~\ref{theo:lar} are provided in the Appendix.
	\subsection{$\DLA$ Algorithm}
	We now introduce our $\DLA$ (Algorithm~\ref{alg:dla}), a \textbf{D}eterministic and \textbf{L}inear query complexity \textbf{A}pproximation algorithm that has an approximation factor of $6+\epsilon$. 
	The key strategy is combining the properties of two disjoint sets with a greedy threshold to construct several candidate solutions to analyze the theory of the non-monotone objective function. 
	
	$\DLA$ takes an instance $(f, V, B)$ and a parameter $\epsilon$ as inputs. $\DLA$ consists of two phases. At the first one (lines~\ref{dla:p1-b}-\ref{dla:p1-e}), the algorithm calls $\LA$ as a subroutine to obtain a candidate solution $S'$ and get an approximate range of optimal value $[\Gamma, 19\Gamma]$ where $\Gamma=f(S')$ (line 1). It then adapts the greedy threshold to add elements with high-density gain into two disjoint sets $X$ and $Y$. Specifically, this phase consists of multiple iterations; each scans one time over the ground set (lines 3-9). An element added to the set $T \in \{X, Y\}$ to which has the higher density gain without violating the budget constraint, as long as the density gain is at least $\theta$, which initiates to $19\Gamma/(6\epsilon')$ and decreases by a factor of $(1-\epsilon')$ after each iteration until less than to $\Gamma(1-\epsilon')/(6B)$,  where $\epsilon'=\epsilon/14$. 
	
	The second phase (lines~\ref{dla:p2-b}-\ref{dla:p2-e}) is to improve the quality of candidate solution $T\in \{X, Y\}$ which was obtained at the end of phase 1.  Denote  $T^i$ as a set of the first $i^{th}$ elements added in $T$ in phase 1. 	
	Our main observation is that the performance of $\DLA$ depends on the cost of $T'=\arg\max_{T^i: c(T^i)\leq B-c(r)}(i)$. Recall that $r$ is $\arg\max_{o\in O}c(o)$ and $c(r)\leq B$.
	We scan an upper bound of $c(T')$ from $\epsilon'B$ to $B$ and improve the quality of $T'$ by adding into it an element $e=\arg\max_{e\in V: c(T'\cup  \{e\})\leq B}f(T'\cup \{e\})$ (lines 13-15).
	\paragraph{}
	The following Lemmas give the bounds of the final solution when $c(r)< (1-\epsilon')B$ and $c(r)\geq(1-\epsilon')B$, respectively.
	\begin{algorithm}
		\SetNlSty{text}{}{:}
		\KwIn{An instance $(f, V, B)$, $\epsilon>0$ } 
		$S'\leftarrow \LA(f, V, B),\Gamma \leftarrow f(S'), \epsilon' \leftarrow \frac{\epsilon}{14}$ \label{dla:p1-b}
		\\
		$ \Delta \leftarrow \lceil\frac{ \log(1/\epsilon')}{\epsilon'} \rceil, \theta \leftarrow 19\Gamma/(6\epsilon' B), X\leftarrow \emptyset, Y \leftarrow \emptyset$
		\\
		\While{$\theta \geq \Gamma(1-\epsilon')/(6B)$}
		{
			\ForEach{$e \in V \setminus (X\cup Y)$}
			{
				Find $T \in \{X, Y\}$ such that:
				$c(T \cup \{e\})\leq B$ and $T = \arg \max_{T \in \{X, Y\}, \frac{f(e|T)}{c(e)}\geq\theta}\frac{f(e|T)}{c(e)}$
				\\
				\textbf{ If} exist such set $T$ \textbf{then} $T \leftarrow T \cup \{e\}$
				
			}
			$\theta \leftarrow (1-\epsilon')\theta$\\
		}
		\label{dla:p1-e}
		\For{$l=0$ to $ \Delta$}
		{ \label{dla:p2-b}
			$X'_{(l)} \leftarrow \arg\max_{X^i: c(X^i)\leq \epsilon'B(1+\epsilon')^l, i\leq |X|}i$
			\\
			$Y'_{(l)} \leftarrow \arg\max_{Y^i: c(Y^i)\leq \epsilon'B(1+\epsilon')^l, i\leq |X|}i$
			\\
			$e_X \leftarrow  \arg\max_{e \in V:  c(X'_{(l)}\cup \{e\})\leq B}f(X'_{(l)}\cup \{e\})$ \label{alg:dla-select_ex}
			\\
			$e_Y \leftarrow  \arg\max_{e \in V:  c(Y'_{(l)}\cup \{e\})\leq B}f(Y'_{(l)}\cup \{e\})$
			\\
			$X_{(l)}\leftarrow X'_{(l)} \cup \{e_X\}$, 
			$Y_{(l)}\leftarrow Y'_{(l)} \cup \{e_Y\}$
		} \label{dla:p2-e}
		$S \leftarrow \arg \max_{T \in \{S', X, Y, X_{(0)},.., X_{( \Delta )}, Y_{(0)}, .., Y_{( \Delta )} \}}f(T)$
		\\ 
		\Return $S$.
		\caption{$\DLA$ Algorithm}
		\label{alg:dla}
	\end{algorithm}
	\begin{lemma}
		If $c(r)< (1-\epsilon')B$, one of two things happens:
		\textbf{a)} $f(S)\geq\frac{1}{6(1+\epsilon')}\opt$; \textbf{b)} There exists a subset $X' \subseteq X$ so that $f(O\cup X')\leq 2f(S)+ \max\{\frac{1+\epsilon'}{1-\epsilon'}f(S), \frac{(1-\epsilon')}{6}\opt\}$. 
		\\
		Similarly, one of two conditions happens:
		\textbf{c)} $f(S)\geq\frac{1}{6(1+\epsilon')}\opt$; \textbf{d)} There exists a subset $Y' \subseteq Y$ so that $f(O\cup Y')\leq 2f(S)+ \max\{\frac{1+\epsilon'}{1-\epsilon'}f(S),\frac{(1-\epsilon')}{6}\opt\}$.
		\label{lem:dla1}
	\end{lemma}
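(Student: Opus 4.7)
The plan is to carry out a case analysis on how elements of $O\setminus X$ failed to enter $X$ during phase~1, and then to use the phase-2 augmentation to absorb the elements that were blocked by the budget. By inequality~\eqref{ine:sub} applied with $T=O$ and the disjoint sets $X,Y$, bounding $f(O\cup X')$ and $f(O\cup Y')$ separately will ultimately bound $\opt$, so the lemma gives exactly the two ingredients needed for the main theorem.

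First, I would record the greedy invariant: every element added to $X$ entered with density gain at least the current threshold $\theta$, which decays geometrically from $19\Gamma/(6\epsilon' B)$ to just above $\Gamma(1-\epsilon')/(6B)$. Fix $o\in O\setminus X$ and consider the last iteration in which it was examined. It was rejected either because (i) $c(X_{\text{current}}\cup\{o\})>B$, or because (ii) its density gain into both $X$ and $Y$ fell below the then-current $\theta$; in particular, in case~(ii) at the final iteration $f(o\mid X)/c(o) < \Gamma(1-\epsilon')/(6B)$. For case~(ii), summing over all such $o$ and using $\sum c(o)\leq c(O)\leq B$ together with submodularity yields $\sum_o f(o\mid X)\leq \Gamma(1-\epsilon')/6\leq (1-\epsilon')\opt/6$, which contributes the second term inside the $\max$.

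Next, I would handle case~(i) via phase~2. The key step is to choose the level $l$ so that $X'_{(l)}$ is the longest prefix of $X$ whose cost stays below the threshold at which a blocking element of $O$ would first have been admitted; because the cost levels $\epsilon' B(1+\epsilon')^l$ span $[\epsilon' B, B]$ and $\Delta=\lceil \log(1/\epsilon')/\epsilon'\rceil$, and because the hypothesis $c(r)<(1-\epsilon')B$ guarantees room for such a blocking element, some $l\leq \Delta$ works. The phase-2 augmentation appends $e_X=\arg\max_{e:c(X'_{(l)}\cup\{e\})\leq B}f(X'_{(l)}\cup\{e\})$, which dominates the value of appending any particular blocked $o$. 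Combining this with the density-threshold inequality applied to the remaining $o\in O\setminus X'_{(l)}$ (whose densities relative to $X'_{(l)}$ are, up to the $(1+\epsilon')/(1-\epsilon')$ slack from comparing consecutive threshold values, at most the one that admitted the final element of $X'_{(l)}$) produces an inequality of the form $f(O\cup X'_{(l)})\leq 2f(X_{(l)}) + \frac{1+\epsilon'}{1-\epsilon'}f(X)$, where one copy of $f(X_{(l)})$ accounts for $f(X'_{(l)})$ and the other for the augmentation term. Finally I would combine both cases and bound everything by $f(S)\geq \max\{f(X), f(X_{(l)})\}$: either the bound in case~(ii) already forces $f(S)\geq \opt/(6(1+\epsilon'))$, giving alternative~(a), or the merged inequality yields (b) with $X'=X'_{(l)}$; the argument for (c)–(d) is symmetric in $Y$.

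The main obstacle I anticipate is the case~(i) bookkeeping: one must verify that the phase-2 choice of $e_X$ genuinely compensates for the single element of $O$ rejected by budget at level~$l$, and that the geometric spacing of the cost levels $\epsilon' B(1+\epsilon')^l$ introduces only a $(1+\epsilon')$ slack, while the geometric decrement of $\theta$ contributes a $(1-\epsilon')$ slack. Aligning these two slacks against the factor of $6$ baked into the initial threshold $19\Gamma/(6\epsilon' B)$ — so that the final constant is exactly $\tfrac{1}{6(1+\epsilon')}$ in (a) and the $\max$ expression in (b) — is the delicate part of the argument; the remaining steps are essentially a mechanical application of submodularity and the greedy invariant.
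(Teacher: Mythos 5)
Your sketch assembles the right ingredients (the descending-threshold invariant, the geometric cost grid, the phase-2 augmentation, and the observation that $c(r)<(1-\epsilon')B$ is exactly what lets the grid reach $B-c(r)$), but the mechanism you describe for the budget-blocked elements is not the one that closes the argument, and one of the lemma's alternatives is missing. First, the per-element dichotomy ``blocked by budget vs.\ rejected by density'' taken relative to the \emph{evolving} set $X$ does not work as stated: several elements of $O$ can be budget-blocked, and a single augmentation $e_X$ cannot ``dominate'' appending all of them. The correct move --- which the paper makes explicit in Case 2.2 of the proof of Theorem~\ref{theo:rla} --- is to fix the prefix $X'=X'_{(l)}$ whose cost is (up to the $(1+\epsilon')$ grid slack) at most $B-c(r)$; since $c(o)\le c(r)$ for every $o\in O$, \emph{no} element of $O$ is ever infeasible with respect to $X'$, so every $o\in O\setminus X'$ is controlled by a density threshold and the budget case disappears. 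The augmentation $e_X$ then does not absorb a blocked optimal element: it stands in for $u_{i+1}$, the element of $X$ immediately following the prefix, whose marginal gain together with $f$-values of $X'\setminus O$ pays for the charged sum $\sum_{o\in O_{>i}\setminus\{r\}}f(o\mid X')$ through the cost comparison $c(X'\setminus O)+c(u_{i+1})>c(O_{>i}\setminus\{r\})$ (the deterministic analogue of inequality~\eqref{alg3:ineq1}); without this comparison your charging has nothing to charge against, because $c(X')$ itself can be far below $B$.

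Second, alternative \textbf{a)}, $f(S)\ge\frac{1}{6(1+\epsilon')}\opt$, does not come from your case~(ii) sum (which only yields the $\frac{(1-\epsilon')}{6}\opt$ term inside the $\max$). It comes from the sub-case in which $u_{i+1}$ was admitted at the initial threshold $\theta_1=19\Gamma/(6\epsilon'B)\ge\opt/(6\epsilon'B)$: then $c(X'\cup\{u_{i+1}\})>\epsilon'B(1+\epsilon')^{l}\ge\epsilon'B$ forces $f(S)\ge f(X_{(l)})\ge f(X'\cup\{u_{i+1}\})\ge c(X'\cup\{u_{i+1}\})\,\theta_1\ge\opt/(6(1+\epsilon'))$, exactly parallel to the sub-case ``$u_{i+1}$ is considered at the first iteration'' in Theorem~\ref{theo:rla}. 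Your sketch has no branch producing this bound. Finally, a smaller but real point: when you transfer a density rejection to the prefix you need $f(o\mid X')\le f(o\mid \tilde X)$, which by submodularity requires the rejection to have occurred while the algorithm's state $\tilde X$ was still \emph{contained in} $X'$; using the last rejection against the full $X$ gives the inequality in the wrong direction. These are the delicate points you flagged as ``mechanical,'' and they are where the proof actually lives.
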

	\begin{lemma}
		If $c(r)\geq (1-\epsilon')B$, one of two things happens:
		\textbf{e)} $f(S)\geq\frac{(1-\epsilon')^2}{6}\opt$; \textbf{f)} There exists a subset $X' \subseteq X$ so that $f(O\cup X')\leq 2f(S)+ \max\{\frac{(1-\epsilon')\opt}{6},f(S)+\frac{\epsilon'\opt}{6}\}$.
		\\
		Similarly,  one of two things happens:
		\textbf{g)} $f(S)\geq\frac{(1-\epsilon')^2}{6}\opt$; \textbf{h)} There exists a subset $Y' \subseteq Y$ so that $f(O\cup Y')\leq 2f(S)+ \max\{\frac{(1-\epsilon')\opt}{6},f(S)+\frac{\epsilon'\opt}{6}\}$. 
		\label{lem:dla2}
	\end{lemma}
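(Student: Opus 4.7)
The plan is to mirror the structure of the proof of Lemma~\ref{lem:dla1}, but exploit the fact that in this regime a single element $r \in O$ already consumes almost the entire budget, so $c(O \setminus \{r\}) \leq \epsilon' B$ and the phase-2 augmentation (lines~\ref{dla:p2-b}-\ref{dla:p2-e}) is tailor-made to capture such a dominant element by appending it to a very small-cost prefix of $X$ or $Y$. The argument will split on whether phase~1 by itself already produces $f(S) \geq \frac{(1-\epsilon')^2}{6}\opt$ (case~(e)), or whether a phase-2 candidate $X_{(l)}$ containing a suitable witness must be invoked to produce the inequality in case~(f); by symmetry the $Y$ statement follows verbatim.

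First I would decompose $O = \bigl(O \cap (X \cup Y)\bigr) \cup \bigl(O \setminus (X \cup Y)\bigr)$ and treat every $o \in O \setminus (X \cup Y)$ according to the reason it was not admitted during phase~1. If $o$ failed the density test against both candidate sets at the end of the loop, submodularity combined with the geometric threshold decrease gives $f(o \mid T) \leq \frac{(1+\epsilon')\, c(o)}{B}\, f(T)$ for the relevant $T \in \{X, Y\}$. If instead $o$ was rejected because $c(T \cup \{o\}) > B$ at the moment it was considered, then the current $T$ already has cost at least $B - c(o)$, so $f(T) \geq \theta \cdot (B - c(o))$ for the then-current threshold. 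Summing the first type of inequalities through \eqref{ine:sub} produces a bound of the form $f(O \cup T) \leq f(T) + \frac{1+\epsilon'}{B}\, f(T) \sum_{o} c(o) + (\text{residual from budget-rejected elements})$, which is the skeleton of case~(f).

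Next I would isolate the contribution of $r$. If $r$ was rejected on density grounds, the single term $\frac{(1+\epsilon')\,c(r)}{B}\,f(T) \leq (1+\epsilon')\, f(T)$ dominates (since $c(r)/B \geq 1-\epsilon'$), and the remainder $\sum_{o \in O\setminus\{r\}} c(o) \leq \epsilon' B$ contributes at most $\epsilon'(1+\epsilon') f(T)$, explaining the $f(S) + \frac{\epsilon'\opt}{6}$ branch of the $\max$. If instead $r$ was rejected because adding it would have overflowed the budget, then at that point the partial set $T$ satisfied $c(T) > B - c(r)$, which by $c(r) \geq (1-\epsilon')B$ means $c(T) \geq 0$ and more importantly $c(T)$ lies in some dyadic band $[\epsilon'B(1+\epsilon')^{l-1}, \epsilon'B(1+\epsilon')^l]$ for an index $l^\ast \leq \Delta$; for that $l^\ast$ the phase-2 prefix $X'_{(l^\ast)}$ coincides with the truncation of $T$ at the instant $r$ was considered, and line~\ref{alg:dla-select_ex} then either selects $e_X = r$ or a strictly better substitute. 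This yields an $X' \subseteq X$ (taking $X' = X'_{(l^\ast)}$) whose corresponding $f(O \cup X')$ is controlled by the $\frac{(1-\epsilon')\opt}{6}$ branch of the $\max$.

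The main obstacle I anticipate is the bookkeeping that glues these two subcases into the stated $\max\{\tfrac{(1-\epsilon')\opt}{6},\, f(S) + \tfrac{\epsilon'\opt}{6}\}$. In particular, one must track how the factor $(1-\epsilon')$ enters through $c(r)/B$ and the outer threshold ratio $19\Gamma/(6\epsilon' B)$ from $\LA$, and verify that the phase-2 index grid $\{\epsilon'B(1+\epsilon')^l : 0 \leq l \leq \Delta\}$ is fine enough to guarantee that the chosen $X'_{(l^\ast)}$ is neither too small (losing phase-1 density mass) nor too large (losing room for $r$). Once this indexing is nailed down, summing via \eqref{ine:sub} together with Lemma~\ref{lem:la1}-style disjointness gives the required $f(O \cup X') \leq 2f(S) + \max\{\ldots\}$; the $Y$-statement then follows by the same argument with the roles of $X$ and $Y$ swapped.
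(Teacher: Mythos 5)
Your overall architecture (case split on whether phase~1 already certifies (e), density- versus budget-rejection of elements of $O$, and using the phase-2 augmentation to absorb the dominant element $r$) points in the right direction, but two steps are genuinely broken. First, your density-rejection bound $f(o\mid T)\leq \frac{(1+\epsilon')c(o)}{B}f(T)$ is the inequality for $\LA$'s \emph{adaptive} threshold $f(Z)/B$, not for $\DLA$. In $\DLA$ the admission test is against the absolute grid value $\theta$, whose last executed value satisfies $\theta_{last}<\Gamma/(6B)\leq \opt/(6B)$; the correct bound is therefore $f(o\mid T)< c(o)\,\theta \leq c(o)\,\opt/(6B)$, and summing over $O\setminus\{r\}$ with $c(O\setminus\{r\})\leq \epsilon' B$ is what produces the $\epsilon'\opt/6$ (and, in the other sub-case, the $(1-\epsilon')\opt/6$) terms in the statement. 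With your $f(T)$-based bound you would instead derive terms of the form $(1+\epsilon')f(S)$, which is the shape of Lemma~\ref{lem:dla1}'s conclusion, not of this lemma's, and the constants do not close.

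Second, the step where you locate a grid index $l^\ast$ such that $X'_{(l^\ast)}$ ``coincides with the truncation of $T$ at the instant $r$ was considered'' fails precisely in this regime: feasibly appending $r$ to a prefix requires that prefix to have cost at most $B-c(r)\leq \epsilon' B$, which is at or \emph{below} the smallest grid point $\epsilon'B(1+\epsilon')^0$, so the grid is not fine enough at the bottom end --- this is exactly why the paper splits off $c(r)\geq(1-\epsilon')B$ into a separate lemma rather than reusing the indexing argument behind Lemma~\ref{lem:dla1}. The obstacle you flag at the end is the actual crux, and it is not resolved by the grid; it is resolved by exploiting $c(O\setminus\{r\})\leq \epsilon'B$ so that (unless (e) holds, i.e.\ unless the relevant prefix already has cost at least $(1-\epsilon')B$ and hence value at least $(1-\epsilon')^2\opt/6$) no element of $O\setminus\{r\}$ is ever budget-rejected, and by bounding $r$'s contribution either through the augmentation $f(X'_{(l)}\cup\{r\})\leq f(X'_{(l)}\cup\{e_X\})\leq f(S)$ when it is feasible or through $f(r)\leq f(e_{max})\leq f(S)$ otherwise, which is where the extra $f(S)$ in the branch $f(S)+\epsilon'\opt/6$ comes from. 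Finally, the elements of $O$ that were admitted into the \emph{other} set $Y$ pass the density test and are covered by neither of your rejection cases; charging them via the twin-set inequality $\sum_{o\in O\cap Y}f(o\mid\cdot)\leq f(Y)\leq f(S)$ is needed to obtain the $2f(S)$ term, and your proposal only gestures at this in its last sentence. As written, the proposal would not assemble into the stated bound.
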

	\begin{theorem}
		For any $\epsilon \in (0, 1)$,	$\DLA$ is a deterministic algorithm that has a query complexity $O(n\log(1/\epsilon)/\epsilon)$ and returns an approximation factor of $6+ \epsilon$.
		\label{theo:dla}
	\end{theorem}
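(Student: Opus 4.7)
The plan splits naturally into a query-count argument and an approximation-ratio argument. For query complexity, I would first note that $\LA$ runs in $O(n)$ queries by Theorem~\ref{theo:la1}. The outer while loop of Phase~1 decreases $\theta$ geometrically by a factor $(1-\epsilon')$ from $19\Gamma/(6\epsilon' B)$ down to $\Gamma(1-\epsilon')/(6B)$, so the ratio of endpoints is $\Theta(1/\epsilon')$ and the number of iterations is $O(\log(1/\epsilon')/\epsilon') = O(\log(1/\epsilon)/\epsilon)$. Each iteration scans $V$ once, costing $O(n)$ queries. Phase~2 runs $\Delta+1 = O(\log(1/\epsilon)/\epsilon)$ iterations, and the only nontrivial cost per iteration is the $\arg\max$ computations on lines~\ref{alg:dla-select_ex}--12, each taking $O(n)$ queries. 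Summing gives $O(n\log(1/\epsilon)/\epsilon)$.

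For the approximation ratio I would apply Lemmas~\ref{lem:dla1} and~\ref{lem:dla2}, splitting on whether $c(r) < (1-\epsilon')B$ or $c(r) \geq (1-\epsilon')B$. In each regime, if either the $X$-branch or $Y$-branch yields the ``direct'' bound (conditions (a)/(c) or (e)/(g)), then $f(S) \geq \opt/(6(1+\epsilon'))$ or $f(S) \geq (1-\epsilon')^2\opt/6$, both of which are at least $\opt/(6+\epsilon)$ for $\epsilon' = \epsilon/14$. Otherwise both alternatives (b) and (d) (resp.\ (f) and (h)) hold, producing subsets $X' \subseteq X$ and $Y' \subseteq Y$. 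Since $X$ and $Y$ are maintained disjoint throughout Phase~1, so are $X'$ and $Y'$, and inequality~\eqref{ine:sub} gives $f(O) \leq f(O\cup X') + f(O\cup Y')$. Adding the two guarantees from the lemmas then yields
\begin{align*}
\opt \leq 4f(S) + 2\max\!\Bigl\{\tfrac{1+\epsilon'}{1-\epsilon'}f(S),\ \tfrac{1-\epsilon'}{6}\opt\Bigr\}
\end{align*}
in the first regime, and the analogous inequality with $\max\{(1-\epsilon')\opt/6,\ f(S)+\epsilon'\opt/6\}$ in the second.

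The remaining work is a routine case split on which term of the $\max$ dominates: if the $f(S)$-term wins, one solves linearly for $f(S)/\opt$ and gets $(1-\epsilon')/(6-2\epsilon')$; if the $\opt$-term wins, one moves it to the left-hand side and gets a bound of the form $(2+\epsilon')/12 \cdot \opt \leq 4f(S)$, again of order $1/6$. In every case a short manipulation shows $f(S) \geq \opt/(6+\epsilon)$ for the choice $\epsilon' = \epsilon/14$. The main obstacle I anticipate is the bookkeeping in this final case analysis: one must verify that the worst of the four subcases across both regimes still fits inside $6+\epsilon$ with the stated $\epsilon' = \epsilon/14$, and one must be careful that $X'$ and $Y'$ are genuinely disjoint (which uses the fact that Phase~1 never moves an element between $X$ and $Y$) before invoking~\eqref{ine:sub}.
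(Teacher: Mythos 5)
Your proposal is correct and follows essentially the same route as the paper: the same iteration count for the two loops, the same case split on $c(r)$ versus $(1-\epsilon')B$, the same invocation of Lemmas~\ref{lem:dla1} and~\ref{lem:dla2} together with the disjointness of $X'$ and $Y'$ and inequality~\eqref{ine:sub}, and the same final algebra on the $\max$ term (the paper organizes that last step by splitting on whether $f(S)\geq \opt/6$ rather than on which argument of the $\max$ dominates, but this is equivalent bookkeeping). The only blemish is the displayed constant ``$(2+\epsilon')/12\cdot\opt\leq 4f(S)$,'' which should read $(2+\epsilon')\opt/3\leq 4f(S)$; the surrounding reasoning makes clear this is a slip rather than a gap.
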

	\begin{proof}
		The query complexity of Algorithm~\ref{alg:dla} is obtained by combining the operation of Algorithm~\ref{alg:la} and two main loops of Algorithm~\ref{alg:dla}. The first  and the second loops contain at most $\lceil \log(19/\epsilon')/\epsilon' \rceil +1$ and $\lceil \log(1/\epsilon')/\epsilon' \rceil$ iterations, respectively. Each iteration of these loops takes $O(n)$ queries; thus we get the total number of queries at most:
		$$
		3n+ n(\lceil \frac{1}{\epsilon'} \log(\frac{19}{\epsilon'})\rceil+1)+n\lceil\frac{1}{\epsilon'}\log(\frac{1}{\epsilon'})\rceil=O(\frac{n}{\epsilon}\log(\frac{1}{\epsilon})).
		$$
		To prove the factor, we consider two following cases:
		\\
		\textbf{Case 1.} If $c(r)\geq (1-\epsilon')B$. By using Lemma~\ref{lem:dla2}, we consider two cases: If \textbf{e)} or \textbf{g)} happens. Since  $\epsilon'=\frac{\epsilon}{14}<\frac{1}{14}$ we get: 
		$
		\opt \leq \frac{6f(S)}{(1-\epsilon')^2}\leq 6(1+\frac{14}{13}\epsilon')^2 f(S) <(6+\epsilon)f(S)
		$, the Theorem holds. We consider the otherwise case: both \textbf{e)} and \textbf{h)} happen. There exist $X'\subseteq X$, $Y'\subseteq Y$ and $X' \cap Y'=\emptyset$ satisfying: $\opt=f(O) \leq f(O\cup X')+ f(O\cup Y')$
		\begin{align}
		&\leq  4 f(S)+ 2\max\{(1-\epsilon')\opt/6,f(S)+\epsilon'\opt/6\} \label{ine:theo1.1}.
		\end{align}
		We consider two sub-cases: If $f(S)\geq \frac{\opt}{6}$, the Theorem holds. If $f(S)< \frac{\opt}{6}$, put it back into \eqref{ine:theo1.1} we get: \\
		$\opt < 4f(S)+ \frac{1+\epsilon'}{3}\opt \Rightarrow
		\opt \leq  \frac{12 f(S)}{2-\epsilon'}< (6+\epsilon)f(S).
		$
		\\
		\textbf{Case 2.} If $c(r)< (1-\epsilon')B$. By applying the Lemma~\ref{lem:dla1}, we consider two cases: If \textbf{a)} or \textbf{c)} happens, we get
		$f(S)\geq\frac{\opt}{6(1+\epsilon')} \Rightarrow \opt \leq (6+6\epsilon')f(S)$ and the Theorem holds. If both \textbf{b)} and \textbf{d)} happen.  There exist $X'\subseteq X$, $Y'\subseteq Y$ and $X' \cap Y'=\emptyset$ satisfying: $	\opt =f(O)\leq f(O\cup X')+ f(O\cup Y')$
		\begin{align}
		&\leq 4 f(S)+ 2\max\{\frac{1+\epsilon'}{1-\epsilon'}f(S), \frac{(1-\epsilon')}{6}\opt\} \label{ine:theo1.2}.
		\end{align}
		If $f(S)\geq \frac{\opt}{6}$, the Theorem is true. We consider the case $f(S)<\frac{\opt}{6}$, put it  into \eqref{ine:theo1.2} we get 
		$\opt < 4f(S)+ \frac{1+\epsilon'}{1-\epsilon'}\frac{\opt}{3}$.
		\\
		$
		\Rightarrow	\opt <\frac{6(1-\epsilon')}{1-2\epsilon'}f(S)=(6+\frac{6\epsilon'}{1-2\epsilon'})f(S)<(6+\epsilon)f(S)
		$.
		Combining two cases, we obtain the proof.
	\end{proof}
	\subsection{$\RLA$ Algorithm}
	We further introduce the $\RLA$ (Algorithm~\ref{alg:rla}), a \textbf{R}andomized and \textbf{L}inear query complexity \textbf{A}pproximation algorithm with the factor of $4+\epsilon$. $\RLA$ re-uses the algorithmic framework of $\DLA$ algorithm with some modifications. In particular,  we combine the threshold greedy method with a random process to construct a series of candidate solutions $S_j$.
	
	Specifically, the first phase of the algorithm consists of a loop (lines 2-11) with at most $\lceil \log(4/\epsilon')/\epsilon'\rceil$ iterations, and each takes one pass over the ground set, where $\epsilon'=\epsilon/10$. This loop simultaneously constructs a \textit{candidate set} $U=\{u_1, \ldots, u_j\}$ and a solution $S_j$ as follows: Each element $e$, not in the current candidate set, having the density gain at least $\theta$, is added into the candidate set and then added into $S_{j+1}$ with probability $1/2$. The set $U$ plays an important role to the $\RLA$'s performance. In the second phase of this algorithm, we boost the quality of candidate solution $S_j$ by using the same strategy with the $\DLA$ (lines~\ref{rla:p2-b}-\ref{rla:p2-e}). 
	\begin{algorithm}
		\SetNlSty{}{}{:}
		\KwIn{An instance $(f, V, B)$, $\epsilon>0$} 
		$S'\leftarrow  \LAR(f, V, B,p=\sqrt{2}-1,\alpha= \sqrt{2+2\sqrt{2}})$,
		$S_j \leftarrow \emptyset$, $j \leftarrow 0$, $\Gamma \leftarrow f(S')$,	 $\theta \leftarrow \frac{16.034\cdot\Gamma}{4\epsilon' B}$, 
		$\epsilon' \leftarrow \frac{\epsilon}{10}$
		\\
		\While{$\theta \geq \Gamma(1-\epsilon')/(4B)$}
		{
			\ForEach{$e \in V \setminus \{u_1, u_2, \ldots, u_j\}$}
			{
				
				\If{$\frac{f(e|S_j)}{c(e)} \geq \theta $ and $c(S_j)+c(e)\leq B$}
				{
					$u_{j} \leftarrow e$; 
					\\
					With probability $1/2$ do: $S_{j+1} \leftarrow S_j \cup \{e\}$ otherwise $S_{j+1} \leftarrow S_j$
					\\
					$j \leftarrow j+1$
				}
			}
			$\theta \leftarrow (1-\epsilon')\theta$
		}
		\For{$l=0$ to $\lceil \log(1/\epsilon')/\epsilon' \rceil$}
		{ \label{rla:p2-b}
			$S'_{(l)} \leftarrow \arg\max_{S_i: c(S_i)\leq \epsilon'B(1+\epsilon')^l, i\leq |S|}i$
			\\
			$e^l_{max} \leftarrow  \arg\max_{e \in V:  c(S'_{(l)}\cup \{e\})\leq B}f(S'_{(l)}\cup \{e\})$
			\\
			$S_{(l)}\leftarrow S'_{(l)} \cup \{e^l_{max}\}$
		} \label{rla:p2-e}
		$S \leftarrow \arg \max_{X \in \{S', S_j,S_{(0)}, \ldots, S_{(\lceil \log(1/\epsilon')/\epsilon' \rceil)}\}}f(X)$
		\\
		\Return $S$.
		\caption{$\RLA$ Algorithm}
		\label{alg:rla}
	\end{algorithm}
	\paragraph{}
	We now analyze the performance of $\RLA$.  Considering the end of the algorithm, we first define the following notations: 
	For any $u_i \in U=\{u_1, u_2, \ldots, u_j\}$,  define $\tau(u_i)=i$,   $S^{<u_i}=S_{i-1}$; for any $ e \in V\setminus U, \tau(e)=+\infty$.	Denote 
	$T=
	j$, if $c(S_{j-1}\cup \{u_j\})\leq B-c(r)$. Otherwise, 
	$T=\min\{i: 0 \leq i\leq  j-1, c(S_i\cup \{u_{i+1}\}) > B-c(r)\}$. 
	\\
	Lemma \ref{lem:rla1} provides an efficient tool to estimate $f(S_i)$ for all $i\leq j$ that is helpful to obtain $\RLA$’s performance guarantee.
	\begin{lemma}
		For each $u_i \in \{u_1, \ldots, u_j\}$, we define: $O_{\leq i}=\{e: e \in O, \tau(e)\leq i\}$,  $O_{> i}=\{e: e \in O, \tau(e)> i\}$ and
		\begin{align*}
		X_e&=
		\begin{cases}
		1, e \in O_{\leq i}\setminus S_{i} \ \mbox{or} \  e \in S_{i}\setminus O
		\\
		0, \mbox{otherwise}.
		\end{cases}
		\\
		Y_e&=
		\begin{cases}
		1, e \in O_{\leq i}\setminus (S_{i} \cup \{r\}) \ \mbox{or } u \in S_{i} \setminus (O\setminus \{r\})
		\\
		0, \mbox{otherwise}.
		\end{cases}
		\end{align*}
		\begin{itemize}
			\item[a)] For any $S_i$ we have 
			$\E[f(S_i)]=\E[\sum_{e\in V}X_e\cdot f(e|S^{<e})]$.
			\item[b)] For any $S_i$ satisfying $c(S_i)\leq  B-c(r)$ we have 
			$\E[f(S_i)]=\E[\sum_{e\in V}Y_e\cdot f(e|S^{<e})]$.
		\end{itemize}
		\label{lem:rla1}
	\end{lemma}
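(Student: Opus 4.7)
The plan is to combine a telescoping identity for $f(S_i)$ with a fair-coin symmetry argument that swaps ``included in $S$'' with ``skipped''. Let me introduce $\sigma(e)\in\{0,1\}$ for the Bernoulli$(1/2)$ coin the algorithm flips when deciding whether to add a candidate $e$ into $S$. The key independence observation is: whether $e$ enters the candidate set $U$ at all, its index $\tau(e)$, and the prefix $S^{<e}=S_{\tau(e)-1}$ are determined by the coins of strictly earlier candidates, so they are independent of $\sigma(e)$ itself. Writing $U_i=\{u_k:k\le i\}$ and telescoping,
\begin{align*}
f(S_i)=\sum_{e\in U_i}\mathbf{1}[\sigma(e)=1]\,f(e\mid S^{<e}),
\end{align*}
and pulling out the $1/2$ using the independence noted above yields the ``master identity''
\begin{align*}
\E[f(S_i)]=\tfrac12\,\E\!\Bigl[\sum_{e\in U_i} f(e\mid S^{<e})\Bigr].
\end{align*}

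For part (a), I would split $\sum_e X_e f(e\mid S^{<e})$ by whether $e\in O$ or $e\notin O$. An element $e\in O$ contributes exactly when $\tau(e)\le i$ and $\sigma(e)=0$ (the case $e\in O_{\le i}\setminus S_i$), while $e\notin O$ contributes exactly when $\tau(e)\le i$ and $\sigma(e)=1$ (the case $e\in S_i\setminus O$). Since $\sigma(e)$ is independent of $\{\tau(e)\le i\}$ and of $f(e\mid S^{<e})$, in either case the expected contribution of $e$ equals $\tfrac12\E[\mathbf{1}[e\in U_i]f(e\mid S^{<e})]$. Summing over $e\in V$ matches the master identity.

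For part (b), the plan is identical, with one twist: $Y_e$ agrees with $X_e$ except at $e=r$, where $Y_r=\mathbf{1}[r\in S_i]$ (i.e.\ $\sigma(r)=1$) in place of $X_r=\mathbf{1}[r\in O_{\le i}\setminus S_i]$ (i.e.\ $\sigma(r)=0$). By the same fair-coin symmetry, the two events produce the same expected contribution $\tfrac12\E[\mathbf{1}[r\in U_i]f(r\mid S^{<r})]$, so the totals agree and $\E[\sum_e Y_e f(e\mid S^{<e})]=\E[f(S_i)]$. The hypothesis $c(S_i)\le B-c(r)$ is not needed for the identity itself; it fixes the regime in which this identity will later be applied.

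The main delicate point, and the step I would be most careful about, is the independence claim used throughout: $\sigma(e)$ must be independent both of the event $\{\tau(e)\le i\}$ and of the random history captured by $S^{<e}$. This is justified because the algorithm's admission test for $e$ happens before it flips $\sigma(e)$, and $S^{<e}$ is a function only of coins flipped for strictly earlier candidates. Once that independence is in hand, both identities reduce to term-by-term bookkeeping of fair Bernoulli$(1/2)$ swaps between ``in $S$'' and ``not in $S$''.
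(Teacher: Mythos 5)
Your proof is correct and follows the same route the paper's appendix proof must take: the pathwise telescoping $f(S_i)=\sum_{e\in S_i}f(e\mid S^{<e})$ combined with the observation that, conditioned on the history of earlier candidates (which determines $\tau(e)$ and $S^{<e}$), the admission coin for $e$ is a fair Bernoulli, so the events $e\in O_{\le i}\setminus S_i$ and $e\in O_{\le i}\cap S_i$ contribute equally in expectation, and likewise for the single swapped term at $e=r$ in part (b). Your side remark that the hypothesis $c(S_i)\le B-c(r)$ is not needed for the identity itself is also accurate; it only matters for how the lemma is invoked in Theorem~\ref{theo:rla}.
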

	\begin{theorem}
		For any $\epsilon \in (0, 1)$, $\RLA$ is a randomized algorithm with query complexity of $O(n\log(1/\epsilon)/\epsilon)$ and returns an approximation ratio of $4+ \epsilon$ in expectation.
		\label{theo:rla}
	\end{theorem}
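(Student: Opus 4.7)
The plan is to mirror the proof of Theorem~\ref{theo:dla}, replacing the ``two disjoint sets'' bookkeeping by the expectation bound that results from adding each candidate element to $S$ with probability $1/2$; this is precisely what allows the factor to improve from $6+\epsilon$ to $4+\epsilon$. The execution of $\RLA$ decomposes into three stages (the $\LAR$ call, the threshold-greedy outer while loop, and the cost-bucket refinement for loop), and I would analyze each in turn.

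For the query complexity I would simply sum the three contributions: $\LAR$ runs in $O(n)$ queries by Theorem~\ref{theo:lar}; the outer while loop terminates once $\theta$ drops below $\Gamma(1-\epsilon')/(4B)$, which takes at most $\lceil \log(16.034/\epsilon')/\epsilon' \rceil + 1$ iterations since $\theta$ is scaled by $(1-\epsilon')$ each round, and each iteration performs $O(1)$ oracle calls per element for a total of $O(n)$; the final for loop adds $\lceil \log(1/\epsilon')/\epsilon' \rceil + 1$ further iterations of cost $O(n)$. Substituting $\epsilon' = \epsilon/10$ yields the claimed $O(n\log(1/\epsilon)/\epsilon)$.

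For the $4+\epsilon$ factor I split on $c(r)$, exactly as in Theorem~\ref{theo:dla}. When $c(r) < (1-\epsilon')B$, there is a random index $T \leq j$ with $c(S_T) \leq B - c(r)$, so part (b) of Lemma~\ref{lem:rla1} applies. Combining the identity $\E[f(S_T)] = \E\bigl[\sum_{e\in V} Y_e\, f(e\mid S^{<e})\bigr]$ with the submodular chaining $f(O) \leq f(O \cup S_T) + \sum_{e \in O \setminus S_T} f(e\mid S^{<e})$, and applying Lemma~\ref{lem:sub2} to the submodular function $g(X) = f(X \cup O)$ --- where each $u_i \in U\setminus O$ belongs to $S_T$ with probability at most $1/2$ --- gives $\E[f(O \cup S_T)] \geq f(O)/2$, so
\begin{align*}
\opt \leq 2\,\E[f(S_T)] + 2\,\E\Bigl[\sum_{e \in O \setminus S_T} f(e\mid S^{<e})\Bigr].
\end{align*}
The terminal threshold guarantee $f(e\mid S^{<e}) < \theta_{\mathrm{fin}} c(e) \leq \Gamma(1-\epsilon')c(e)/(4B)$ for $e \in O \setminus U$ bounds the residual sum by $\Gamma(1-\epsilon')/4$, and combining with $f(S) \geq \Gamma$ yields $\opt \leq (4+\epsilon)\,\E[f(S)]$ after collecting the $O(\epsilon)$ terms. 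When $c(r) \geq (1-\epsilon')B$, the large element $r$ nearly saturates the budget on its own, and the second phase is designed to compensate: for the appropriate bucket index $l$ the set $S'_{(l)}$ has cost close to $B - c(r)$, while $e^l_{\max}$ is at least as valuable as $r$ relative to $S'_{(l)}$, so $S_{(l)}$ captures the utility of $O$ within the same $4+\epsilon$ factor.

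The main obstacle is the expectation bookkeeping around the random stopping index $T$: both the threshold at termination and the membership of elements in $U$ are correlated with the random coin flips that define $S_j$, so using Lemma~\ref{lem:rla1} together with Lemma~\ref{lem:sub2} in a way that cleanly decouples these correlations requires per-element indicator arguments analogous to the $X_e, Y_e$ construction. A secondary technical subtlety is ensuring that the second-phase candidate $S_{(l)}$ handles the boundary case $c(r) \geq (1-\epsilon')B$ uniformly with the main case, so that the two bounds glue together at the advertised $(4+\epsilon)$-ratio without any hidden blow-up in $\epsilon'$.
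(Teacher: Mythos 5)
Your query-complexity accounting and the high-level idea (replace the two-disjoint-sets bookkeeping of $\DLA$ by the probability-$1/2$ insertion plus Lemma~\ref{lem:sub2} applied to $g(X)=f(X\cup O)$, which is exactly how the factor drops from $6$ to $4$) match the paper. But the core of your factor argument has a genuine gap. In the case $c(r)<(1-\epsilon')B$ you bound the residual $\sum_{e\in O\setminus S_T}f(e\mid S^{<e})$ by claiming the terminal threshold guarantee $f(e\mid S^{<e})<\theta_{\mathrm{fin}}c(e)$ for all $e\in O\setminus U$. This is false: an optimal element can fail to enter $U$ not because its density ever fell below the threshold but because $c(S_j)+c(e)>B$, i.e.\ it was rejected for budget reasons while still having large marginal value. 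This budget-rejection issue is precisely why the paper defines the stopping index $T=\min\{i: c(S_i\cup\{u_{i+1}\})>B-c(r)\}$ and, in the sub-case $T<j$, proves the charging inequality $c(\hat{S})+c(u_{i+1})>c(O_{>i}\setminus\{r\})$ so that the value of the never-considered optimal elements $O_{>i}\setminus\{r\}$ can be charged against $\sum_{e\in\hat{S}\cup\{u_{i+1}\}}f(e\mid S^{<e})$ at the threshold $\theta_{(i+1)}/(1-\epsilon')$, with $u_{i+1}$ and $r$ absorbed by the second-phase element $e^l_{\max}$. Your plan contains no substitute for this charging step, and without it the $4+\epsilon$ bound does not follow.

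Relatedly, you have the two phases of the algorithm attached to the wrong cases. The paper uses the second-phase bucket construction $S_{(l)}$ only in the sub-case $c(r)\leq(1-\epsilon')B$ and $T<j$ (where $S'_{(l)}$ is chosen with cost near $B-c(r)$ so that $e^l_{\max}$ can stand in for $r$), whereas the case $c(r)\geq(1-\epsilon')B$ is handled without the second phase at all: there $c(O\setminus\{r\})<\epsilon'B$, so either $c(S_j)\geq(1-\epsilon')B$ and the accumulated density already gives $f(S_j)\geq(1-\epsilon')^2\opt/4$, or every $e\in(O\setminus\{r\})\setminus S_j$ still fits and is genuinely threshold-rejected, and $f(r)$ is covered by $f(e_{\max})$ from $\LAR$. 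Your sketch for the large-$r$ case ("$e^l_{\max}$ is at least as valuable as $r$ relative to $S'_{(l)}$, so $S_{(l)}$ captures the utility of $O$") does not bound the contribution of $O\setminus\{r\}$ and is not a proof as stated. You correctly flag the expectation bookkeeping around $T$ as a difficulty, but the decisive missing idea is the cost-based charging of budget-rejected optimal elements, not the decoupling of correlations.
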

	\begin{proof}
		The query complexity of $\RLA$ is obtained by the same argument in the proof of Theorem~\ref{theo:dla}.  Denote by $\theta_i$  $\theta$ at the iteration $i$, by $\theta_{(i)}$ $\theta$ when  $u_i$ is added into $U$, and  $\theta_{last}$ is $\theta$ at the last iteration of the first loop.
		For the approximation factor, we consider the following cases:
		\\
		\textbf{Case 1.} If $c(r)> (1-\epsilon')B$, $c(O\setminus \{r\})< B-(1-\epsilon')B=\epsilon'B$. We consider two following sub-cases: 
		\textbf{Case 1.1.}
		If $c(S_j)\geq (1-\epsilon')B$, then $f(S)\geq f(S_j)\geq c(S_j)(1-\epsilon')\frac{\opt}{4}\geq(1-\epsilon')^2\frac{\opt}{4}.$
		Since $\epsilon'=\frac{\epsilon}{10}<\frac{1}{10}$, we have:
		$\opt \leq \frac{4f(S)}{(1-\epsilon')^2}\leq4(1+\frac{10}{9}\epsilon')^2f(S)\leq(4+\epsilon)f(S).
		$ \textbf{Case 1.2.} If $c(S_j)< (1-\epsilon')B$, $c(S_j)+c(e)\leq c(S_j)+c(O\setminus \{r\})<B$ for all $e\in (O\setminus  \{r\})\setminus S_j$. Thus $\frac{f(e|S_j)}{c(e)}< \frac{(1- \epsilon' )\Gamma}{4B}\leq  \frac{(1- \epsilon')\opt}{4B}$.
		\begin{align}
		\Rightarrow	&f((O\setminus \{r\}) \cup S_j)-f(S_j) \leq \sum_{e\in (O\setminus  \{r\})\setminus S_j}f(e|S_j) \nonumber
		\\
		&<c(O\setminus \{r\})(1-\epsilon')\opt/(4B) \leq \epsilon' (1- \epsilon') \opt/4. \label{theo3.ine1}
		\end{align}
		Since each element in $V$ appears in $S_j$ with probability $1/2$,  applying Lemma~\ref{lem:sub2} gives $\E[f(O\setminus \{r\} \cup S_j)]\geq \frac{1}{2}f(O\setminus \{r\})$. Combine this with \eqref{theo3.ine1}, we have: $f(O)\leq f(O\setminus \{r\})+f(r)$
		\\
		$\leq 2 \E[f(O\setminus \{r\} \cup S_j)]+ f(e_{max}) < 3 \E[f(S)]+\frac{\epsilon' (1- \epsilon') \opt}{2}.$
		$$\Rightarrow	 \opt < \frac{6\E[f(S)]}{2-\epsilon'(1-\epsilon')}\leq \frac{6\E[f(S)]}{2-\epsilon'}\leq (4+\epsilon)\E[f(S)].$$
		\textbf{Case 2.} If $c(r)\leq  (1-\epsilon')B$, $c(O\setminus \{r\})\geq \epsilon'B$. Considering the following sub-cases:
		\textbf{Case 2.1.} If $T=j$, by the definition of $T$ we have:  $O_{>T}=\emptyset$. Therefore 
		\begin{align}
		&	f(S_T \cup O)-f(S_T) \leq   \sum_{e \in O_{\leq T}\setminus S_T}f(e|S_T) \nonumber
		\\
		& \leq \sum_{e \in O_{\leq T}\setminus S_T}f(e|S^{<e})+  \sum_{e \in  S_T\setminus O}f(e|S^{<e})   \label{theo3:ine1}
		\\
		& = \sum_{e\in V} X_e \cdot  f(e|S^{<e}) = \E[f(S_T)].  \label{theo3:ine4}
		\end{align}
		where \eqref{theo3:ine1} due to $f(e|S^{<e})>0, \forall e \in S_j$ and $X_e$ is defined in  Lemma~\ref{lem:rla1}. By applying Lemma~\ref{lem:sub2} again, we have  $\E[f(O\cup S_T)]\geq f(O)/2$. Combine this with \eqref{theo3:ine4}, we attain 
		$$
		\E[f(S)] \geq \E[f(S_T)]\geq\E[f(O\cup S_T)]/2 \geq f(O)/4.
		$$
		\textbf{Case 2.2.}  If $T < j$,  $U$ contains at least $T+1$ elements and we have $c(S_T)+c(u_{T+1})> B-c(r)> \epsilon' B$.
		We now consider the second loop of the Algorithm~\ref{alg:dla}.
		Since $\epsilon' B<B-c(r)\leq B$, there exists an integer number $l$ that:\\ 
		$\epsilon'B\leq (1+\epsilon')^l \epsilon' B \leq B-c(r)<(1+\epsilon')^{l+1}\epsilon'B.$ 
		\\
		Assuming that $S'_{(l)}=S_i$ for some $i$. By selection rule of $S'_{(l)}$  we have $c(S_i)\leq (1+\epsilon')^l \epsilon' B < c(S_i \cup \{u_{i+1}\})$ thus $
		c(S_{i}\cup \{u_{i+1}\}) >\frac{\epsilon' B}{1+\epsilon'}
		$. We further consider two  sub-cases. \textit{If $u_{i+1}$ is considered at the first iteration of the first loop}, by the selection rule of $e^l_{max}$  at the second loop, we get:\\
		$f(S_{(l)}) \geq f(S_i \cup \{u_{i+1}\} )
		\geq  c(S_i \cup \{u_{i+1}\})\theta_1
		\geq \frac{\opt}{4(1+\epsilon')^2}
		$. Hence, $
		\opt \leq 4(1+\epsilon')^2 f(S)<(4+\epsilon)f(S).
		$
		\\
		\textit{If $u_{i+1}$ is considered at the $l^{th}$ iteration, $l\geq 2$}. Let $\hat{S}=S_i\setminus (O\setminus \{r\})$ and $\hat{O}=O_{\leq i}\setminus (S_i \cup \{r\})$. We show that 	\begin{align}
		&c(\hat{S})+c(u_{i+1})> c(O_{>i}\setminus \{r\}) \label{alg3:ineq1}.
		\end{align}
		\begin{align*}
		\mbox{Indeed,  }	&	c(S_i\setminus (O\setminus \{r\}))+c(S_i\cap (O\setminus \{r\}))+c(u_{i+1}) \ \  \ \nonumber
		\\ 
		&=c(S_i)+c(u_{i+1}) > B-c(r) \geq c(O\setminus\{r\}) \nonumber
		\\
		& \geq c(O_{>i}\setminus \{r\})+c(S_i\cap (O\setminus \{r\})). \nonumber
		\end{align*}
		thus \eqref{alg3:ineq1} is true.
		On the other hand, for any element $e\in O_{>i}\setminus \{r\}$,
		its  density gain with respect to $S_i$ is smaller than the threshold at the previous iteration (in the first loop), i.e., $ \frac{f(e|S_i)}{c(e)}\leq \frac{\theta_{(i+1)}}{1-\epsilon'}$. Combine this with \eqref{alg3:ineq1}, we obtain:
		\begin{align}
		&\sum_{e \in O_{>i}\setminus \{r\}}f(e|S_i)= \sum_{e \in O_{>i}\setminus \{r\}}\frac{f(e|S_i)}{c(e)}c(e) \nonumber
		\\
		&\leq \frac{c(O_{>i}\setminus \{r\})\theta_{(i+1)}}{1-\epsilon'} 
		< \frac{c(\hat{S}\cup\{u_{i+1}\})\theta_{(i+1)}}{1-\epsilon'} \nonumber
		\\
		&\leq  \frac{\sum_{e\in \hat{S} \cup \{u_{i+1}\}}f(e|S^{<e})}{1-\epsilon'} \label{ine:cost}
		\end{align}
		where \eqref{ine:cost} due to the reason that $\frac{f(e|S^{<e})}{c(e)}\geq \theta_{(i+1)}, \forall e\in S_i \cup \{u_{i+1}\}$, thus $\sum_{e\in \hat{S} \cup \{u_{i+1}\}}f(e|S^{<e})\geq c(\hat{S}\cup\{u_{i+1}\})\theta_{(i+1)}$.
		\begin{align*}
		\Longrightarrow \ \ 	&	 f(S_i\cup O)-f(S_i\cup \{r\})\leq  \sum_{e \in O\setminus(S_i\cup \{r\})}f(e|S_i) \nonumber
		\\
		& = \sum_{e \in \hat{O}}f(e|S_i) +  \sum_{e \in O_{> i}\setminus \{r\}}f(e|S_i)  \nonumber
		\\
		&< \frac{\sum_{e \in \hat{O}}f(e|S^{<e}) 
			+  \sum_{e\in \hat{S} }f(e|S^{<e}) + f(u_{i+1}|S_i)}{1-\epsilon'}  \nonumber
		\end{align*}
		\begin{align}
		\leq  \frac{Y_e \cdot f(e|S^{<e}) + f(e^l_{max}|S_i)}{1-\epsilon'}  \label{ine:expectation}
		\end{align}where 
		$Y_e$ is defined in Lemma~\ref{lem:rla1}. 
		From~\eqref{ine:expectation} and by applying Lemma~\ref{lem:rla1}, we have:
		\begin{align*}
		&\E[f(S_i\cup O)] <  \frac{\E[f(S_i)]+\E[f(e^l_{max}|S_i)]}{1-\epsilon'}+
		\\
		& 
		\E[f(S_i\cup \{r\})] 	\leq \frac{\E[f(S)]}{1-\epsilon'}+ 
		\E[f(S)] = \frac{2-\epsilon'}{1-\epsilon'} \E[f(S)].
		\end{align*}
		By applying Lemma~\ref{lem:sub2}, we have  $f(O) \leq  2\E[f(S_i\cup O)]$. Thus $$f(O)< \frac{2(2-\epsilon')}{1-\epsilon'}\E[f(S)]\leq(4+\epsilon)\E[f(S)].$$
		By combining all cases, we attain the proof.
	\end{proof}
	\section{Experimental Evaluation}
	\label{sec:expr}
	In this section, we compare the performance between our algorithms and state-of-the-art algorithms for the $\SMK$ problem on three  applications: Revenue Maximization, Image Summarization, and Maximum  Weighted Cut.
	\subsection{Applications And Datasets}
	\paragraph{Revenue Maximization.}Given a social network that represented by a graph $G=(V, E)$ where $V$ and  represent a set of users a set of user connections, respectively Each edge $(u, v)$ assigned a weight $w_{(u, v)}$ that reflects  the ``closeness'' of $u$ and $v$. We follow  \cite{fast_icml} to define the advertising revenue of any node set $S\subseteq V$ as  
	$
	f(S)=\sum_{u \in V\setminus S}\sqrt{\sum_{v \in S: (v, u)\in E} w_{(u, v)}}
	$.
	The weight $w_{(u,v)}$  is randomly sampled from the
	continuous uniform distribution $U(0, 1)$ as in and each node $u$ has a cost $c(u)=g(\sqrt{\sum_{(u,v)\in E}w_{(u,v)}})$ where $g(x)=1-e^{-\mu x}$ and $ \mu= 0.2$ \cite{Han2021_knap}.  Given a budget of $B$, the goal of the problem is to select a set $S$ with the cost at most $B$ to maximize $f(S)$. This problem is an instance of non-monotone $\SMK$ \cite{Han2021_knap}. In this application, we utilized the ego-Facebook dataset from  \cite{leskovec_celf} which consists of over $4$K nodes and over $88$K edges.
	\paragraph{Image Summarization.}
	Given a graph $G$ = ($V$, $E$) where each node $u\in V$ represents an image, and each edge $(u,v)\in E$ is assigned a weight $w_{u,v}$ representing the similarity between image $u$ and image $v$. Define $c(u)$ the cost to collect the image $u$. The goal is to identify a representative subset $S\subseteq V$ with a limited budget $B$ that maximizes the representative value defined as $
	f(S) = \sum_{u\in V} \max_{v\in S} w_{u,v} - \frac{1}{|V|} \sum_{u\in V}\sum_{v\in S} w_{u,v}$ \cite{fast_icml,Han2021_knap}. The function $f(\cdot)$ is non-monotone, non-negative, and submodular \cite{fast_icml}. Following the recent work \cite{Han2021_knap,fast_icml}, we set this instance as follows: We first randomly selected $500$ images from the CIFAR data sets \cite{img-data,fast_icml}, which contained $10.000$ images. We then measure the similarity between image $u$ and image $v$ by using the cosine similarity of their $3.072$-dimensional pixel vectors. Finally, we use Root Mean Square (RMS) contrast as a metric to evaluate the quality of the images and assign a cost to each image based on its RMS contrast.
	\paragraph{Maximum Weighted Cut.}
	Given a graph $G = (V, E)$, and a non-negative edge weight $w_{u,v}$ for each $(u,v) \in E$. For a set of nodes $S\subset V$, define the weighted cut function $f(S) = \sum_{u\in V \setminus S} \sum_{v \in S} w_{u,v}$. The maximum (weighted) cut problem is to find a subset $S\subseteq V$ such that the  $f(S)$ is maximized. It is indicated in \cite{Kuhnle19-nip,Amanatidis_sampleGreedy} as $f(\cdot)$ is non-monotone and submodular. The datasets used in the application included an Erdős-Rényi (ER) random graph with 5000 nodes, and an edge probability of $0.2$  and the cost of each node $c(u)$ was randomly uniformly chosen from the range $(0,1)$ as in \cite{Amanatidis_sampleGreedy}.
	\paragraph{Experiment Settings.}
	We compare our algorithms with the applicable  state-of-the-art algorithms listed below:
	\begin{itemize}
		\item \textbf{FANTOM}: The randomized algorithm of \cite{fast_icml} with the expected factor of the $10+\epsilon$ in $O(n^2 \log(n)/\epsilon)$.
		\item\textbf{SAMPLE GREEDY}: The  randomized algorithm of \cite{Amanatidis_sampleGreedy} with the  factor $5.83+\epsilon$ in  query complexity of $O(n\log(n/\epsilon)/\epsilon)$ queries. For the easy following, we refer to SAMPLE GREEDY as the \textbf{GREEDY}.
		\item\textbf{SMKDETACC}: The deterministic algorithm of \cite{Han2021_knap} with the  factor $6+\epsilon$ in $O(n\log(k/\epsilon)/\epsilon)$ queries. This is the fastest deterministic approximation algorithm for $\NSMK$.
		\item\textbf{SMKRANACC}: This is the fastest randomized algorithm of  \cite{Han2021_knap} with the expected approximation factor $4+\epsilon$ in  query complexity of $O(n\log(k/\epsilon)/\epsilon)$.
		\item \textbf{SMKSTREAM}: The first streaming algorithm   for studied problem that returns the approximation factor of $6+\epsilon$ within $O(n\log(B)/\epsilon)$ queries \cite{Han2021_knap}.
	\end{itemize}
	
	In our experiments, the budget range from 2\% to 12\% of the total cost of the ground sets as setting of \cite{Amanatidis_sampleGreedy}. We set  $\epsilon=0.1$  for all algorithms and $\alpha=\beta=1/6,h=2,r=2$ for SMKSTREAM \cite{Han2021_knap}. 
	\subsection{Experiment Results}
	\begin{figure}[h]
		\centering
		{\includegraphics[width=.495\linewidth]{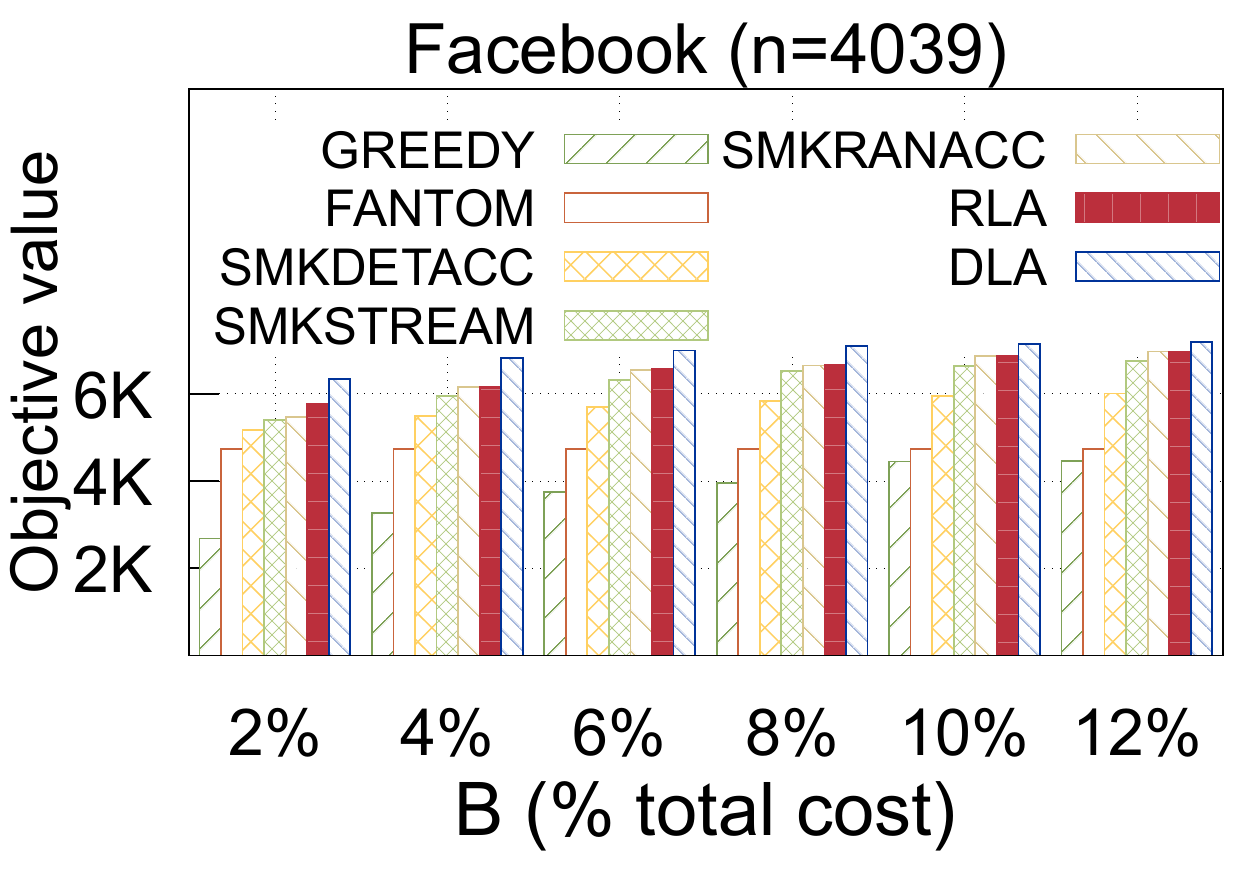}}
		{\includegraphics[width=.495\linewidth]{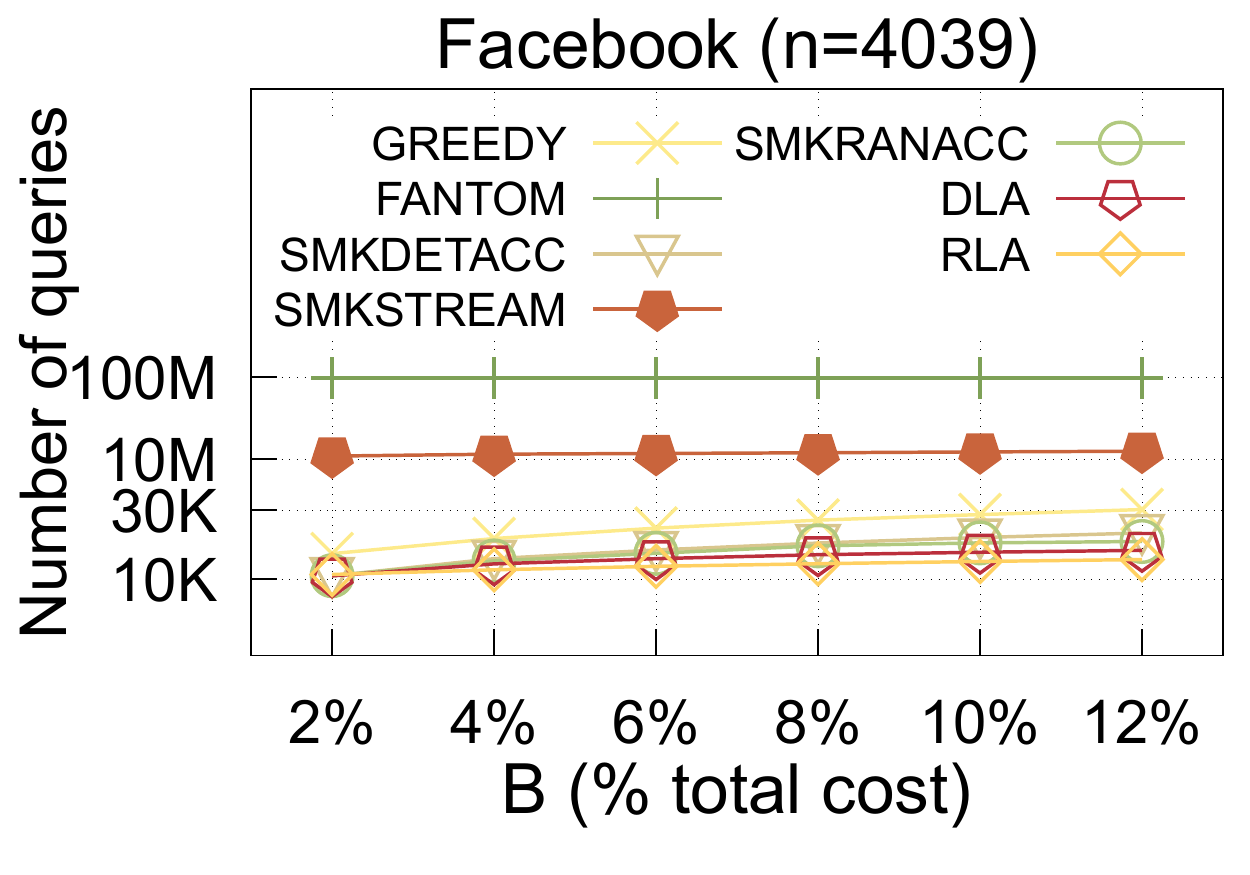}}
		\\ 
		\footnotesize{ \hspace{.5cm}	(a) \hspace{4cm} (b)} 
		\\
		{\includegraphics[width=.495\linewidth]{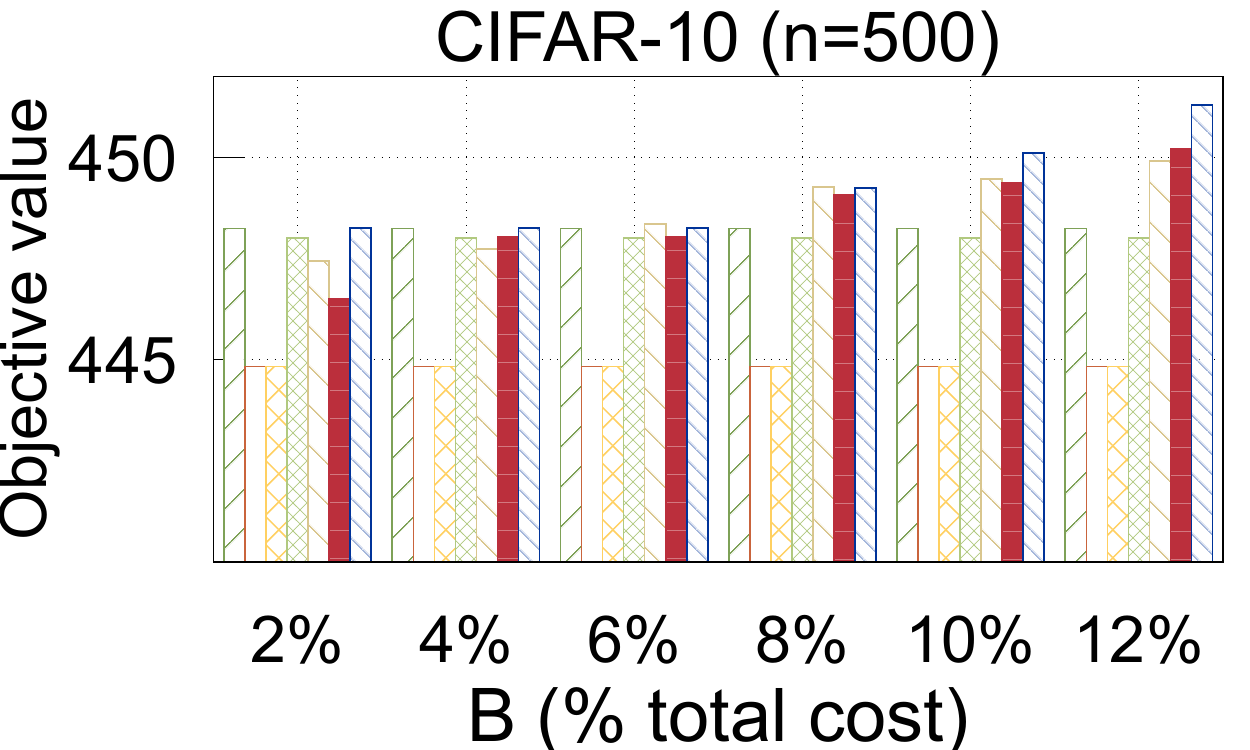}}
		{\includegraphics[width=.495\linewidth]{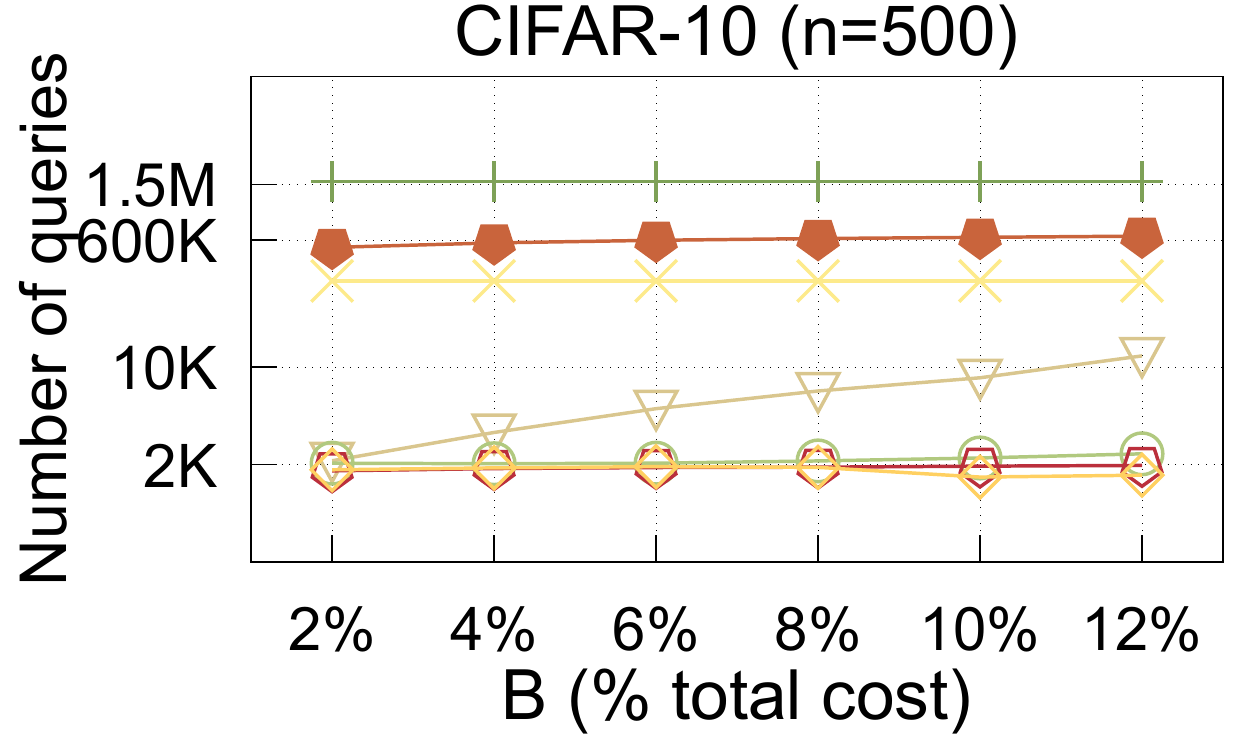}}
		\\
		\footnotesize{ \hspace{.5cm}	(c) \hspace{4cm}(d) }
		\\
		{\includegraphics[width=.495\linewidth]{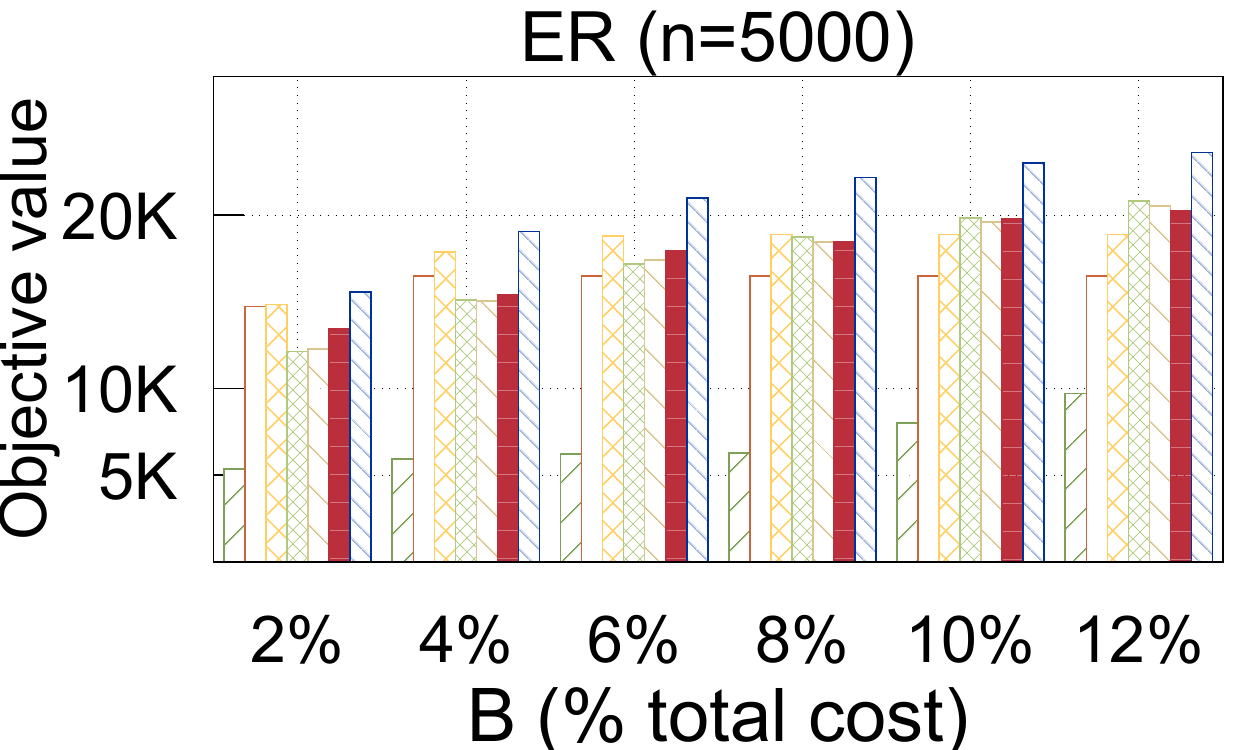}}
		{\includegraphics[width=.495\linewidth]{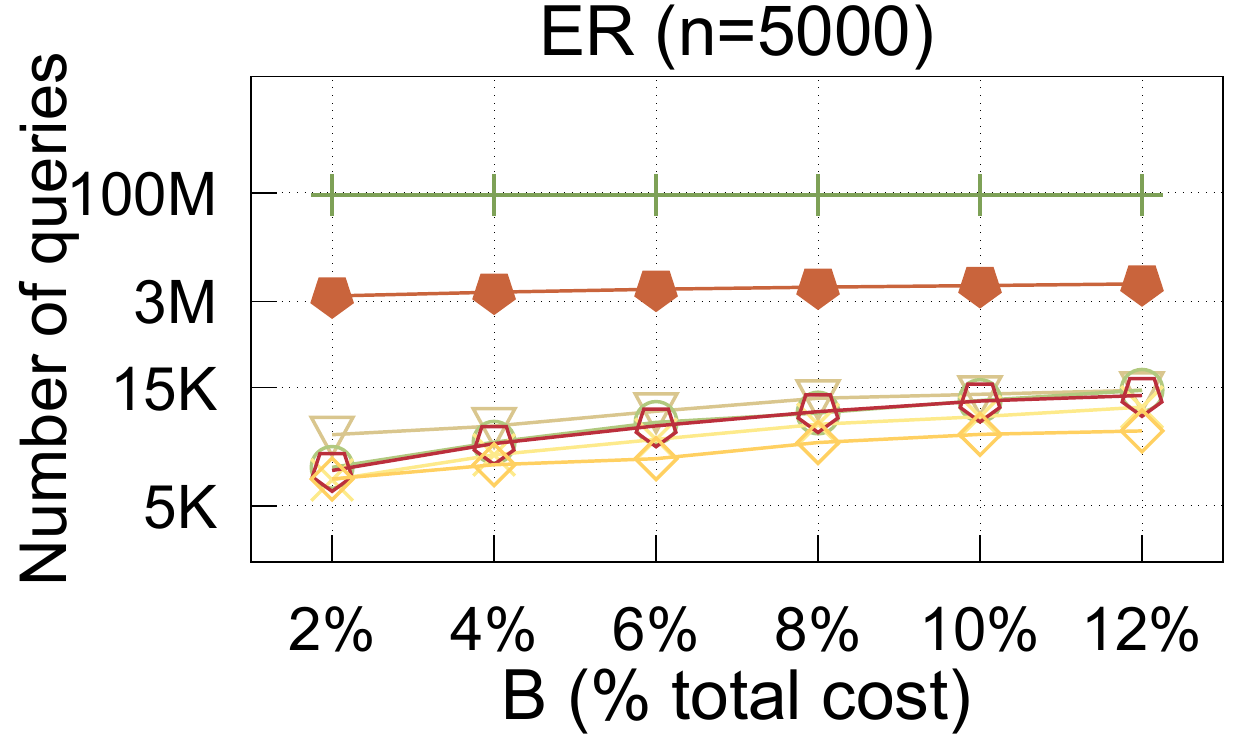}}
		\\
		\footnotesize{	 \hspace{.5cm} (e) \hspace{4cm}(f)}
		\caption{Performance of algorithms for $\NSMK$ on three instances: (a), (b) Revenue Maximization; (c), (d) Image Summarization and (e), (f)  Maximum Weighted Cut.}
		\label{fig:res}
	\end{figure}
	The result of the experiment is shown in Figure \ref{fig:res}. 
	First, Figures \ref{fig:res}(a)(c)(e) represent the quality of algorithms via values of the objective function on $3$ instances.  As can be seen, $\DLA$ always gives the highest values at all $B$ milestones in all instances. $\RLA$, SMKRANACC, and SMKSTREAM are not much different. FANTOM results lower while GREEDY provides the lowest. 
	Regarding the deterministic algorithm, $\DLA$ is several tens to thousands of units better than SMKDETACC on (a) and (e), especially, several times higher on (c). Regarding the randomized algorithm, our $\RLA$ gives as well quality as SMKRANACC. This result insists that our algorithm ensures good performance compared to the algorithms of \cite{Han2021_knap}, which are currently the best. In the end, our algorithm tends to be considerably better than the rest when $B$'s values increase.
	
	Figures \ref{fig:res}(b)(d)(f) illustrate the number of queries of the above algorithms. FANTOM is the highest, SMKSTREAM is the second, and the remaining is much lower. FANTOM and SMKSTREAM require millions of queries, whereas the rest is thousands of times lower than them. In the rest, the GREEDY's queries are also usually higher than the others except on (f). Queries of $\DLA$, $\RLA$, and SMKRANACC look similar while queries of SMKDETACC change due to different datasets. $\RLA$ spends the fewest queries, and  $\DLA$ needs fewer queries than that of SMKDETACC.
	When $B$ grows, the number of queries of $\RLA$ increases slowest, whereas the queries of SMKDETACC increase fastest. Especially, in  Image Summarization, $\DLA$, $\RLA$, SMKDETACC, and SMKRANACC are all approximately $2K$ at $B=2\%$ the total cost; however, SMKDETACC is $5$ times higher than the rest when $B=12\%$.
	
	On the whole, our algorithms, $\DLA$, and $\RLA$ keep the balance between performance guarantee and query complexity. It's extremely important to save running time with big data. Moreover, experimental results show that our algorithms are efficient ones comparable to state-of-the-art algorithms.
	\section{Conclusions}
	\label{sec:con}
	Motivated by the challenge of solving the $\NSMK$ on the massive data, in this work, we proposed two approximation algorithms $\DLA$, $\RLA$. 
	To the best of our knowledge, our algorithms are the first to achieve a constant factor approximation for the considered problem in linear query complexity. 
	Our algorithms' superiority in solution quality and computation complexity compared to state-of-the-art algorithms was supported by the experiment results in three real-world applications. 
	\section*{Acknowledgements}
	This work was supported in part by the National Science Foundation (NSF) grants IIS-1908594.
	\bibliographystyle{named}
	\bibliography{SMK-ref}
\end{document}